\theoremstyle{plain}
\newtheorem{thm}{Theorem}[section]
\newtheorem{cor}[thm]{Corollary}
\newtheorem{prop}[thm]{Proposition}
\theoremstyle{definition}
\newtheorem{defn}[thm]{Definition}
\newtheorem{exa}[thm]{Example}
\numberwithin{equation}{section}
\numberwithin{equation}{section}
\def\as{\quad\text{{\rm a.s.}}}
\def\sumi{\sum_{i=1}^n}
\def\sumj{\sum_{j=1}^n}
\def\sumk{\sum_{k=1}^n}
\def\m{\mu}
\def\t{\tau}
\def\p{\pi}
\def\g{\gamma}
\def\s{\sigma}
\def\D{\Delta}
\def\F{{\mathcal F}}
\def\FF{{\widetilde{F}}}
\def\S{{\mathbf S}}
\def\ph{\varphi}
\def\1{{\mathbbm 1}}
\def\N{{\mathbb N}}
\def\R{{\mathbb R}}
\def\BX{{\mathbf X}}
\def\L{\Lambda}
\def\eqdef{\triangleq}
\def\bs{\boldsymbol \sigma}
\def\bl{\boldsymbol \lambda}
\def\l{\lambda}
\def\intt{\int_0^t}
\def\sumi{\sum_{i=1}^n}
\def\sumij{\sum_{i,j=1}^n}
\def\half{\frac{1}{2}}
\def\brac#1{\langle #1\rangle}
\def\bbrac#1{\big\langle #1 \big\rangle}
\def\lt#1{\Lambda_{#1}}
\def\ito{It\^o}
\def\limT#1{\lim_{T\to\infty}\frac{#1}{T}}
\def\limt#1{\lim_{t\to\infty}\frac{#1}{t}}
\def\intt{\int_0^t}
\def\intT{\int_0^T}
\def\dd{\circ d}
\def\XX{\{X_1,\ldots,X_n\}}
\def\BXX{\{{\mathbf X_1},\ldots,{\mathbf X_n}\}}
\def\ito{It{\^o}}
\def\S{{\bf S}}
\def\m{\mu}
\def\D{\Delta}
\def\p{\pi}
\def\g{\gamma}
\def\as{\quad\text{{\rm a.s.}}}
\def\R{{\mathbb R}}
\def\eqdef{\triangleq}
\def\half{\frac{1}{2}}
\def\sumi{\sum_{i=1}^n}
\def\sumj{\sum_{j=1}^n}
\def\brac#1{\langle #1 \rangle}
\def\dd{\circ d}
\def\T{{\mathcal {T}}}
\def\Scal{{\mathcal {S}}}
\def\intT{\int_0^T}
\def\sumij{\sum_{i,j=1}^n}
\begin{document}
\centerline{\bf \LARGE{Permutation-weighted portfolios and the efficiency}}
\vspace{5pt}
\centerline{\bf \LARGE{of commodity futures markets}}

\vskip 35pt

\centerline{\large{Ricardo T. Fernholz\footnote[1]{Claremont McKenna College, 500 E. Ninth St., Claremont, CA 91711, rfernholz@cmc.edu.} \hskip 60pt Robert Fernholz\footnote[2]{Intech Investment Management, LLC, One Palmer Square, Suite 441, Princeton, NJ 08542, bob@bobfernholz.com.}}}

\vskip 35pt

\centerline{\large{\today}}

\vskip 60pt

\begin{abstract}
A {\em market portfolio} is a portfolio in which each asset is held at a weight proportional to its market value. Functionally generated portfolios are portfolios for which the logarithmic return relative to the market portfolio can be decomposed into a function of the market weights and a process of locally finite variation, and this decomposition is convenient for characterizing the long-term behavior of the portfolio. A {\em permutation-weighted portfolio} is a portfolio in which the assets are held at weights proportional to a permutation of their market values, and such a portfolio is functionally generated only for markets with two assets (except for the identity permutation). A {\em reverse-weighted portfolio} is a portfolio in which the asset with the greatest market weight is assigned the smallest market weight, the asset with the second-largest weight is assigned the second-smallest, and so forth. Although the reverse-weighted portfolio in a market with four or more assets is not functionally generated, it is still possible to characterize its long-term behavior using rank-based methods. This result is applied to a market of commodity futures, where we show that the reverse price-weighted portfolio substantially outperforms the price-weighted portfolio from 1977-2018.
\end{abstract}

\vskip 205pt

\noindent {\bf MSC 2010 subject classifications:} 60H30, 91G10, 

\vskip 5pt

\noindent {\bf Keywords:} Stochastic portfolio theory, functionally generated portfolios, market efficiency, swap portfolios, reverse-weighted portfolios, permutation-weighted portfolios, first-order model, commodity futures

\vfill

\section{Introduction} \label{intro}

Functionally generated portfolios are portfolios with weights derived from a positive $C^2$ function of the market weights. These portfolios were introduced by \citet{F:pgf}, and there have since been a number of generalizations \citep{strong:2014,KR:2017,KK:2020}. Functionally generated portfolios can be constructed to outperform a capitalization-weighted stock market portfolio under realistic conditions \citep{F:2002, FKK:2005, FKR:2018}. We show that a surprisingly simple portfolio, a {\em swap portfolio}, will almost surely outperform the market portfolio over the long term under the weak condition of market {\em coherence}. A swap portfolio is functionally generated, and holds only two assets with the weight of each proportional to the market weight of the other.


The weights of a functionally generated portfolio are functions of the market weights, but the converse is not true. Indeed, a {\em permutation-weighted portfolio}, in which the assets are held at weights proportional to a permutation of their market values, will not be functionally generated if the market contains more than two assets (except for the identity permutation). We extend the decomposition of functionally generated portfolios to general portfolios with weight functions that are continuous semimartingales using the Stratonovich integral, although it is not possible to characterize the behavior of the components of this general decomposition in the same manner as for functionally generated portfolios.

We next consider rank-based portfolios and, specifically, the {\em reverse-weighted portfolio}, in which the weights of the market portfolio are reversed according to rank. {\em Atlas models} and {\em first-order models} are systems of continuous semimartingales with coefficients that depend only on rank \citep{F:2002,BFK:2005,IPBKF:2011}. The mathematical theory of Atlas models and first-order models developed in \citet{BFK:2005} and \citet{IPBKF:2011} is based on a number of earlier results. The existence and uniqueness for solutions of these systems comes from \citet{Bass/Pardoux:1987} and \citet{Stroock/Varadhan:2006}. The long-term behavior of Atlas models and first-order models, including the existence of a stationary distribution and a strong law of large numbers, can be found in \citet{Khasminskii:1960,Khasminskii:1980}. The theory of rank-based systems of continuous semimartingales has been extended in several directions, e.g., infinite Atlas systems \citep{Pal/Pitman:2008,Chatterjee/Pal:2013,Bruggeman:2016}; behavior at triple points \citep{Banner/Ghomrasni:2008}; existence and nonexistence of triple points \citep*{IK:2010,IchibaKaratzasShkolnikov:2013,Sarantsev:2015}; convergence to equilibrium \citep*{IchibaPalShkolnikov-2013,Dembo/Jara/Olla:2017,Dembo/Tsai:2017}; behavior of degenerate systems \citep*{FIK:2013b,FIKP:2013}; large deviations \citep{DSVZ:2016}; and second-order stock-market models \citep*{FIK:2013a}.

We analyze the behavior of a reverse-weighted portfolio in a market represented by a first-order model, and we show that in such a market the reverse-weighted portfolio will almost surely grow faster than the market. We apply this result to a market of commodity futures, a market that can be approximated by a first-order model with rank-symmetric variance parameters. For our application to commodities, we construct implied two-month futures prices and then normalize by setting these prices to be the same on the starting date for the data in a manner similar to \citet{Asness/Moskowitz/Pedersen:2013}. We show that the first-order model for implied two-month commodity futures prices from 1995-2018 has rank-symmetric variance parameters and growth rate parameters that are substantially lower at top ranks than at bottom ranks. These estimated parameters are similar to the first-order parameters estimated for spot commodity prices by \citet{Fernholz:2017a}. 

Consistent with our theoretical results, we show that the reverse-weighted portfolio of commodity futures outperforms the price-weighted market portfolio of commodity futures from 1977-2018. We also show that  over this same time period the reverse-weighted portfolio outperforms the diversity-weighted portfolio with a negative parameter \citep{Vervuurt/Karatzas:2015} as well as the equal-weighted portfolio of commodity futures. These results point to an inefficiency in the commodity futures market.


\section{Markets and Market Portfolios} \label{markets}
 
In this section we introduce some of the basic ideas of stochastic portfolio theory, and further details can be found in \citet{F:2002} and \citet{FK:2009}.  For $n\ge2$, consider a market represented by a family $\XX$ of strictly positive continuous semimartingales with the usual filtration $\F^X_t$, for $t\in[0,\infty)$, such that $X_i(t)$ represents the market value of the $i$th asset at time $t\ge0$. Let us specifically note that here we are defining the $X_i$ as the asset {\em values} rather than as stock {\em capitalizations}, which is the usual practice in stochastic portfolio theory (see, e.g.,  \citet{F:2002}). The reason for this deviation from common practice is that we wish to apply our results to commodities markets, where prices or values exist, but capitalizations have no meaning. Since both commodity prices and stock capitalizations are strictly positive, the usual theory can be applied to either.

Let $\p$ be a portfolio with weight processes $\p_1,\ldots,\p_n$, which are bounded measurable processes adapted to $\F^X$, such that $\p_1(t)+\cdots+\p_n(t)=1$ for $t\ge0$. For a portfolio $\p$, the {\em portfolio value process} $Z_\p$ will satisfy
\[
dZ_\p(t) \eqdef Z_\p(t)\sumi\p_i(t)\frac{dX_i(t)}{X_i(t)},
\]
or, in logarithmic terms,
\begin{equation}\label{1.1}
d\log Z_\p(t)=\sumi\p_i(t)\,d\log X_i(t)+\g^*_\p(t) \, dt,\as,
\end{equation}
with the {\em excess growth rate process}
\begin{align}
\g^*_\p(t)&\eqdef\half\bigg(\sumi\p_i(t)\s_{ii}(t)-\s^2_\p(t)\Big) \label{1.2}\\
&=\half\bigg(\sumi\p_i(t)\s_{ii}(t)-\sumij\p_i(t)\p_j(t)\s_{ij}(t)\bigg),\as,\label{1.101}
\end{align}\
where
\begin{align*}
\s_{ij}(t) \, dt &\eqdef d\brac{\log X_i,\log X_j}_t,\\
\s^2_\p(t) \, dt &\eqdef d\brac{\log Z_\p}_t,
\end{align*}
$\brac{\,\cdot\,}_t$ represents the quadratic variation process, and $\brac{\,\cdot,\cdot\,}_t$ represents the cross variation process. It can be shown that 
\begin{equation*} 
\g^*_\p(t)\ge0,\as,
\end{equation*}
if the $\p_i(t)\ge0$, for $i=1,\ldots,n$, and this provides a measure of the efficacy of diversification in the portfolio. 

Let us denote the total value of the market by $X(t)=X_1(t)+\cdots+X_n(t)$. The {\em market portfolio} $\m$ is the portfolio with weights $\m_1,\ldots,\m_n$ such that each asset is weighted proportionally to its market value:
\begin{equation*}
 \m_i(t)=X_i(t)/X(t).
\end{equation*}
It can be shown that, with appropriate initial conditions, the value process of the market portfolio satisfies $Z_\m(t)=X(t)$. The {\em relative covariance processes} are defined by
\[
\t_{ij}(t)\,dt\eqdef d\brac{\log\m_i,\log\m_j}_t,
\]
and the excess growth rate process can be expressed as
\[
\g^*_\p(t)=\half\bigg(\sumi\p_i(t)\t_{ii}(t)-\sumij\p_i(t)\p_j(t)\t_{ij}(t)\bigg),\as
\]

For a portfolio $\p$, the portfolio log-return relative to the market satisfies
\begin{equation}\label{1.4}
d\log \big(Z_\p(t)/Z_\m(t)\big)=\sumi\p_i(t)\,d\log \m_i(t)+\g^*_\p(t) \, dt,\as
\end{equation}
(see \citet{F:2002}, Proposition 1.2.5). In the case that $\p=\m$, the left-hand side vanishes, and we have
\begin{equation}\label{1.5}
\sumi\m_i(t)\,d\log \m_i(t)=-\g^*_\m(t) \, dt\le0,\as
\end{equation}
From this we see that whatever benefit the market has from diversification is lost in the weighted average of the $\log \m_i$ terms. This suggests that market weights might not always be ``optimal'', and some kind of improvement may be possible.

In order to understand the long-term behavior of portfolios, we need to impose some asymptotic stability conditions. The market is {\em coherent} if for $i=1,\ldots,n$,
\begin{equation*} 
\limt{1} \log\m_i(t) = 0,\as
\end{equation*}
The market is {\em asymptotically diversified} if 
\begin{equation*} 
\limT{1} \intT \g_\m^*(t)\,dt > 0, \as
\end{equation*}
The market is  {\em pairwise asymptotically diversified} if for each $1\le i\ne j\le n$, the submarket $\{X_i,X_j\}$ is asymptotically diversified.

\section{Permutation-weighted portfolios} \label{perm}

In this section we consider functionally generated portfolios and permutation-weighted portfolios. Functionally generated portfolios were introduced by \citet{F:pgf,F:2002}, and we define permutation-weighted portfolios below.

A positive $C^2$ function $\S$ defined on the unit simplex $\D^n\subset\R^n$ {\em generates} a portfolio  $\p$ if 
\begin{equation}\label{2.1}
\log\big(Z_\p(t)/Z_\m(t)\big)  = \log \S(\m(t))  + \Theta(t),\as,
\end{equation}
where the {\em drift process} $\Theta$ is of locally bounded variation. It was shown by \citet{F:2002}, Theorem 3.1.5, that the portfolio $\p$ will have weights
\begin{equation}\label{2.2}
\p_i(t)=\Big(D_i\log\S(\m(t))+1-\sumj\m_j(t)D_j\log\S(\m(t))\Big)\m_i(t),\as,
\end{equation}
for $i=1,\ldots,n$, with 
\begin{equation}\label{2.3}
d\Theta= \frac{-1}{2\S(\m(t))}\sumij D_{ij}\S(\m(t))\m_i(t)\m_j(t)\t_{ij}(t) \, dt,\as
\end{equation}

For $n\ge2$, we see that \eqref{2.2} and \eqref{2.3} indicate that for $1\le i<j\le n$, the function
\begin{equation*} 
\S(x)=\frac{x_i x_j}{x_i+x_j},
\end{equation*}
generates the {\em swap portfolio}  $\p$  with weight processes 
\begin{equation}\label{2.40}
\p_i(t)=\frac{\m_j(t)}{\m_i(t)+\m_j(t)} \quad\text{ and }\quad \p_j(t)=\frac{\m_i(t)}{\m_i(t)+\m_j(t)},\as,
\end{equation}
and $\p_k = 0$ for $k \notin \{i, j\}$, and with drift process
\begin{equation}\label{2.5}
\Theta(t)= \intt \frac{\m_i(s)\m_j(s)}{(\m_i(s)+\m_j(s))^2}\big(\t_{ii}(s)-2\t_{ij}(s)+\t_{jj}(s)\big)ds,\as
\end{equation}
We see that the weights $\p_i,\p_j$ are proportional to ``swapped'' market weights $\m_i,\m_j$. We would like to see if this might improve on the negative effect seen in \eqref{1.5}.

\begin{prop}\label{P1} Suppose that for $n\ge2$ the market $\{X_1,\ldots,X_n\}$ is coherent and pairwise asymptotically diversified. Then the swap portfolio $\p$ with weight processes $\p_i$ and $\p_j$ as in \eqref{2.40} will have a higher asymptotic growth rate than the market portfolio.
\end{prop}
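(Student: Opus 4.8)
The plan is to apply the functional‑generation identity \eqref{2.1} to the swap portfolio and let $t\to\infty$, using coherence to annihilate the $\log\S$ term and pairwise asymptotic diversification to force the drift term to grow linearly in $t$.

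First I would write out \eqref{2.1} for $\p$. Since, by the discussion preceding the proposition, $\p$ is generated by $\S(x)=x_ix_j/(x_i+x_j)$ with drift $\Theta$ as in \eqref{2.5},
\[
\log\big(Z_\p(t)/Z_\m(t)\big)=\log\m_i(t)+\log\m_j(t)-\log\big(\m_i(t)+\m_j(t)\big)+\Theta(t),\as
\]
Dividing by $t$ and letting $t\to\infty$, the first two terms vanish by coherence, and the third vanishes as well, since $\m_i(t)\le\m_i(t)+\m_j(t)\le1$ forces $\tfrac1t\log\m_i(t)\le\tfrac1t\log(\m_i(t)+\m_j(t))\le0$ while the left side tends to $0$. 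Hence $\tfrac1t\log\S(\m(t))\to0$ a.s., and the whole statement reduces to showing that $\tfrac1t\Theta(t)$ has a strictly positive limit.

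The key step, and the one I expect to require the most care, is to identify the integrand of \eqref{2.5} with (twice) the excess growth rate of the market portfolio of the two‑asset submarket $\{X_i,X_j\}$, so that the hypothesis of pairwise asymptotic diversification applies verbatim. Writing $\nu_i=\m_i/(\m_i+\m_j)$ and $\nu_j=\m_j/(\m_i+\m_j)$ for the market weights of that submarket, one has $\m_i\m_j/(\m_i+\m_j)^2=\nu_i\nu_j$; moreover $\t_{ii}(t)-2\t_{ij}(t)+\t_{jj}(t)$ is the quadratic‑variation rate of $\log(\m_i/\m_j)=\log(X_i/X_j)$, hence is intrinsic to the pair and unchanged when computed inside the submarket. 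A short computation using $\nu_i+\nu_j=1$ then shows that the excess growth rate of the submarket's market portfolio equals $\g^*_{\{i,j\}}(t)=\tfrac12\,\nu_i(t)\nu_j(t)\big(\t_{ii}(t)-2\t_{ij}(t)+\t_{jj}(t)\big)$, so \eqref{2.5} becomes $\Theta(t)=2\intt\g^*_{\{i,j\}}(s)\,ds$. Pairwise asymptotic diversification is precisely the statement that $\lim_{T\to\infty}\tfrac1T\intT\g^*_{\{i,j\}}(t)\,dt>0$ a.s., so $\tfrac1t\Theta(t)$ converges to a strictly positive constant.

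Combining the two steps, $\tfrac1t\log\big(Z_\p(t)/Z_\m(t)\big)$ converges a.s.\ to a strictly positive constant; in particular the asymptotic growth rate of $\p$ strictly exceeds that of $\m$, which is the assertion. Apart from the submarket identification in the third step, everything is the coherence squeeze together with \eqref{2.1}, so I do not anticipate further obstacles.
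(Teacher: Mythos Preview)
Your proposal is correct and follows essentially the same route as the paper: the paper likewise writes $\log(Z_\p/Z_\m)=\log\S(\m(t))+\Theta(t)$, identifies the integrand of $\Theta$ as $2\gamma^*_\eta$ for the market portfolio $\eta$ of the submarket $\{X_i,X_j\}$ (your $\nu$), and then invokes coherence and pairwise asymptotic diversification to pass to the limit. Your squeeze argument for $\tfrac1t\log(\m_i(t)+\m_j(t))\to0$ is a bit more explicit than the paper's, but otherwise the arguments coincide.
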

\begin{proof}
For $1\le i<j\le n$, let
\[
\eta_i=\frac{\m_i}{\m_i+\m_j}\quad\text{ and }\quad \eta_j=\frac{\m_j}{\m_i+\m_j},
\]
with $\eta_k = 0$ for $k \notin \{i, j\}$, be the weight processes for the passive portfolio holding $\{X_i,X_j\}$. We see from \eqref{2.5} that the drift process satisfies
\begin{align*}
d\Theta(t)&= \frac{\m_i(t)\m_j(t)}{(\m_i(t)+\m_j(t))^2}\big(\t_{ii}(t)-2\t_{ij}(t)+\t_{jj}(t)\big)dt\\
&=  \eta_i(t)\eta_j(t)\big(\t_{ii}(t)-2\t_{ij}(t)+\t_{jj}(t)\big)dt\\
&=  \eta_i(t)\eta_j(t)\big(\s_{ii}(t)-2\s_{ij}(t)+\s_{jj}(t)\big)dt\\
&=  \big(\eta_i(t)\s_{ii}(t)+\eta_j(t)\s_{jj}(t)-\s^2_\eta(t)\big)dt\\
&=  2\g_\eta^*(t) \, dt,\as,
\end{align*}
using (1.2.3) from \citet{F:2002}. Hence, 
\begin{equation*} 
\log\big(Z_\p(t)/Z_\m(t)\big)  = \log \m_i(t)+ \log\m_j(t) -\log\big(\m_i(t)+\m_j(t)\big)+ 2\intt\g^*_\eta(s) \, ds,\as
\end{equation*}
Since the market $\{X_1,\ldots,X_n\}$ is coherent and  pairwise  asymptotically diversified, 
\begin{align*}
\limt{1}\log\big(Z_\p(t)/Z_\m(t)\big)
 &=2\limt{1}\intt\g^*_\eta(s) \, ds\notag\\
&>0,\as,
\end{align*}
and the proposition follows.
\end{proof}

What can we say about swap portfolios with more than two assets? Unfortunately, it appears that we cannot say much because such portfolios are not functionally generated. Let us now consider portfolios in which an arbitrary number of market weights have been interchanged.

In a market $\{X_1,\ldots,X_n\}$, for $n\ge2$, a {\em permutation-weighted portfolio} is a portfolio $\p$ with weights $\p_i(t)=\m_{p(i)}(t)$, for $i=1,\ldots,n$, where $p$ is a permutation $p\in\Sigma_n$, the symmetric group on $n$ elements. It is well known that a permutation $p$ can be factored into disjoint cycles, each of which represents a mapping of the form 
\begin{equation}\label{6.1}
i_1\mapsto i_2 \mapsto i_3 \mapsto \cdots \mapsto i_m \mapsto i_1,
\end{equation}
with $p(i_j)=i_{j+1}$, for $j<m$, $p(i_m)=i_1$, where $i_1,i_2,\ldots,i_m$ are $m\le n$ distinct elements of $\{1,2,\ldots,n\}$ (see, e.g., \citet{Stanley:2012}, Section~1.3). It is not difficult to show that these cycles will be unique modulo the choice of starting point $i_1$. 

We see from Proposition~\ref{P1} that for $n=2$, the permutation-weighted portfolio $\p$ with $\p_1=\m_2$ and $\p_2=\m_1$ is functionally generated. For $n\ge2$ the market portfolio is permutation-weighted by the identity permutation, and it is generated by the function $\S(x)=1$ (see Example~3.1.6(1), of \citet{F:2002}). As it happens, for $n>2$, the market portfolio is the only functionally generated permutation-weighted portfolio. We need the following result from \citet{F:2002}, Proposition~3.1.11, to prove this.

\begin{prop}\label{L1} 
Let $\{X_1,\ldots,X_n\}$ be a market with $n\ge2$ and let $U\subset\R^n$ be a neighborhood of $\D^n\subset\R^n$. Suppose that $f:U\to\R$ is a  $C^1$ function  such that $f_1(x)+\cdots+f_n(x)=1$, for $x\in U$, and that $\p$ is a portfolio  with $\p_i(t)=f_i(\m(t))$, for $i=1,\ldots,n$. Then $\p$ is functionally generated if and only if  there exists a $C^1$ function $h$ defined on $U$ such that 
\begin{equation}\label{2.6}
D_j\big(f_i(x)/x_i+h(x)\big)=D_i\big(f_j(x)/x_j+h(x)\big),
\end{equation}
for $i,j=1,\ldots,n$ and $x\in U$.
\end{prop}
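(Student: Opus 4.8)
The plan is to recognize functional generation, via the weight formula \eqref{2.2} (Theorem~3.1.5 of \citet{F:2002}), as an exactness/integrability problem for a $1$-form, with the auxiliary function $h$ in \eqref{2.6} encoding the freedom that arises because $\D^n$ is only a hyperplane slice of $\R^n$. Write $L_i(x)\eqdef f_i(x)/x_i$. Formula \eqref{2.2} says that $\p$ is generated by a positive $C^2$ function $\S$ precisely when
\[
L_i(x)=D_i\log\S(x)+1-\sumj x_jD_j\log\S(x),\qquad x\in\D^n,
\]
and since the term $1-\sumj x_jD_j\log\S(x)$ does not depend on $i$, putting $h(x)\eqdef\sumj x_jD_j\log\S(x)-1$ turns this into $L_i+h=D_i\log\S$; that is, the vector field $(L_i+h)_{i=1}^n$ is a gradient. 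Condition \eqref{2.6} is exactly the Schwarz (mixed-partial) symmetry that $(L_i+h)_{i=1}^n$ must satisfy to be a gradient, so the two directions of the proposition amount to passing between ``a generating $\S$ exists'' and ``$(L_i+h)_i$ is closed.''

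For necessity, suppose $\p$ is generated by a positive $C^2$ function $\S$, and fix a $C^2$ extension of $\log\S$ to $U$. By \eqref{2.2}, the weight function of $\p$ satisfies $L_i=D_i\log\S+1-\sumj x_jD_j\log\S$ along $\D^n$, so with $h\eqdef\sumj x_jD_j\log\S-1$, which is $C^1$ on $U$, we have $L_i+h=D_i\log\S$ on $\D^n$; differentiating and using $D_jD_i\log\S=D_iD_j\log\S$ (valid since $\log\S\in C^2$) yields \eqref{2.6}.

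For sufficiency, suppose $h\in C^1(U)$ satisfies \eqref{2.6}. After shrinking $U$ to a convex — hence simply connected — neighborhood of $\D^n$ on which the ratios $f_i/x_i$ are defined, the $1$-form $\omega\eqdef\sumi\big(L_i(x)+h(x)\big)\,dx_i$ is closed by \eqref{2.6}, hence exact by the Poincar\'e lemma: $\omega=dG$ with $G\in C^2$, the regularity of $G$ following from that of the $C^1$ coefficients of $\omega$. Put $\S\eqdef e^{G}>0$, so that $D_i\log\S=L_i+h$. On $\D^n$, where $\sumj x_j=1$ and $\sumj f_j=1$, we get $\sumj x_jD_j\log\S=\sumj f_j+h\sumj x_j=1+h$, and hence $D_i\log\S+1-\sumj x_jD_j\log\S=L_i=f_i/x_i$; thus the portfolio generated by $\S$ through \eqref{2.2} has weights $f_i(\m(t))=\p_i(t)$, so by Theorem~3.1.5 of \citet{F:2002}, $\S$ generates $\p$.

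The step I expect to be the main obstacle is not either implication on its own but the bookkeeping between $\D^n$ and the ambient $U$: the weight formula pins down $D_i\log\S$ only up to a common additive function (the component normal to $\D^n$), which is exactly what $h$ absorbs, and it constrains the weights only on the $(n-1)$-dimensional set $\D^n$. In the necessity direction one must therefore be careful that \eqref{2.6} is asked for a suitable $C^1$ extension of the weight data off the simplex — the one furnished by \eqref{2.2}, chosen compatibly with the fixed $C^2$ extension of $\log\S$ — rather than for an arbitrary extension; the relation $d\!\left(\sumi x_i\right)\equiv0$ on $\D^n$, which makes $\sumi(L_i+h)\,dx_i$ and $\sumi L_i\,dx_i$ restrict to the same $1$-form on $\D^n$, together with a smooth retraction of a thin $U$ onto $\D^n$, is what lets the integrability condition be transferred between the hyperplane slice and the neighborhood. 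Once this is set up, necessity is Schwarz's theorem and sufficiency is the Poincar\'e lemma plus the short simplex computation above.
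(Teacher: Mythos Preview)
The paper does not supply its own proof of this proposition; it is quoted verbatim as Proposition~3.1.11 of \citet{F:2002} and used as a black box in the proof of Proposition~\ref{P2.1}. Your argument --- reading \eqref{2.2} as $L_i+h=D_i\log\S$ for the specific choice $h=\sum_j x_jD_j\log\S-1$, then invoking Schwarz symmetry for necessity and the Poincar\'e lemma on a simply connected $U$ for sufficiency --- is precisely the mechanism behind the result and matches the proof in \citet{F:2002}.

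One remark on the extension issue you flag. In the necessity direction you obtain $L_i+h=D_i\log\S$ only on $\D^n$, since \eqref{2.2} is an identity along $\m(t)\in\D^n$; you cannot simply differentiate this in the ambient $U$ to get \eqref{2.6}. The clean way out is to note that both the weight formula \eqref{2.2} and the decomposition \eqref{2.1} are invariant under multiplying $\S$ by $e^{c(\sum_k x_k-1)}$ for any constant $c$ --- more generally, only the restriction of $\S$ to $\D^n$ matters --- so one may \emph{replace} the given extension of $\S$ by the function $\widetilde\S$ on $U$ defined via $\log\widetilde\S(x)=\int_\gamma\sum_i(L_i+h)\,dx_i$ once one has checked closedness \emph{tangentially} along $\D^n$; alternatively, as in \citet{F:2002}, one works throughout with derivatives tangent to the affine hyperplane $\{\sum_k x_k=1\}$, where the added term $h\sum_k dx_k$ vanishes identically and the ambiguity disappears. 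This is exactly the ``normal-direction freedom absorbed by $h$'' that you identify; the point is that it suffices to verify \eqref{2.6} after restriction to $\D^n$, which is how the paper applies the proposition in the proof of Proposition~\ref{P2.1} anyway.
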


\vspace{0pt}

\begin{prop}\label{P2.1} 
For a market $\{X_1,\ldots,X_n\}$ with $n>2$, the only permutation-weighted portfolio that is functionally generated is the market portfolio.
\end{prop}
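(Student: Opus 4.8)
The plan is to feed the weight function of a permutation-weighted portfolio into Proposition~\ref{L1} and to show that the integrability condition \eqref{2.6} can be met only by the identity permutation once $n>2$. On $\D^n$ the weight function is $f_i(x)=x_{p(i)}$; to match the hypotheses of Proposition~\ref{L1} I would extend it to a neighborhood of $\D^n$ homogeneously of degree zero, $f_i(x)=x_{p(i)}/\sum_k x_k$, so that $\sum_i f_i\equiv 1$ there, and then set $g_i(x)=f_i(x)/x_i$, which on $\D^n$ equals $x_{p(i)}/x_i$. In this notation \eqref{2.6} reads $D_jg_i-D_ig_j=D_ih-D_jh$ for all $i,j$.

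The first step is to extract from this a single scalar obstruction that does not involve $h$. Summing the identity $D_jg_i-D_ig_j=D_ih-D_jh$ cyclically over three distinct indices $i,j,k$ makes the right-hand side telescope to zero, so a necessary condition for functional generation is
\[
 (D_jg_i-D_ig_j)+(D_kg_j-D_jg_k)+(D_ig_k-D_kg_i)=0\qquad\text{for all distinct }i,j,k.
\]
This is easy to evaluate, because for $i\ne j$ one has $D_jg_i=x_i^{-1}$ when $p(i)=j$ and $D_jg_i=0$ otherwise; and a short computation shows that passing from $x_{p(i)}/x_i$ to the degree-zero extension $x_{p(i)}/\sum_k x_k$ merely multiplies the cyclic sum by the positive common factor $(\sum_k x_k)^{-1}$, so it is harmless to work with $g_i(x)=x_{p(i)}/x_i$ throughout.

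Next, supposing $p$ is not the identity, I would produce a triple $(a,b,c)$ violating the displayed condition. Choose $a$ with $b:=p(a)\ne a$; since $p$ is a bijection, $p(b)\ne b$. If $p(b)=a$, so that $a$ and $b$ form a transposition, choose any third index $c$ — available precisely because $n>2$ — and the cyclic sum for $(a,b,c)$ evaluates to $x_a^{-1}-x_b^{-1}$. If instead $c:=p(b)\notin\{a,b\}$, then $a,b,c$ are distinct and the cyclic sum for $(a,b,c)$ evaluates to $x_a^{-1}+x_b^{-1}$, or to $x_a^{-1}+x_b^{-1}+x_c^{-1}$ in the subcase $p(c)=a$. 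In every case the result is strictly positive on $\D^n$, contradicting the necessary condition; hence $p$ must be the identity, and the corresponding permutation-weighted portfolio is the market portfolio, generated by $\S\equiv 1$.

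The substantive work is entirely bookkeeping: keeping the weight function in a form to which Proposition~\ref{L1} literally applies, and checking that $a$, $b=p(a)$, and the auxiliary index $c$ can always be selected as claimed, which is the sole place the hypothesis $n>2$ is used. For $n=2$ the argument correctly breaks down, since there the first case leaves no room for a third index and the second case is impossible, and indeed the swap permutation is functionally generated. I expect the routine case analysis over the cycle structure of $p$, rather than any conceptual point, to be the most tedious part.
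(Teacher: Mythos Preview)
Your argument is correct and follows the paper's route---both invoke Proposition~\ref{L1} and derive a contradiction from the integrability condition \eqref{2.6}---though your device of summing cyclically over three indices to eliminate $h$ in one stroke is a bit more uniform than the paper's separate treatment of $2$-cycles (where an auxiliary index $k$ is used to force $D_ih=D_jh$) and longer cycles (where the conditions are summed around the entire cycle so the $h$-terms telescope). One slip to fix: in the transposition case your cyclic sum $x_a^{-1}-x_b^{-1}$ is not ``strictly positive on $\Delta^n$'' but merely not identically zero there; since the necessary condition is that the cyclic sum vanish identically, the contradiction still stands.
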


\begin{proof}  Suppose that $n>2$, that $p\in\Sigma_n$ is a permutation and that $\p$ is the permutation-weighted portfolio with $\p_i=\m_{p(i)}$, for $i=1,\ldots,n$. Suppose that one of the cyclical factors of $p$ is of order 2, so that for some $i\ne j\in\{1,\ldots,n\}$, $p(i)=j$ and $p(j)=i$. Since $n>2$, there are $k,\ell\in\{1,\ldots,n\}$ with $p(k)=\ell$ such that $i\ne k \ne j$ and $i\ne \ell \ne j$. If $\p$ is functionally generated, then Proposition~\ref{L1} implies that there is neighborhood $U$ of $\Delta^n$ and a $C^1$ function $h:U\to \R$ such that
\[
D_j\big(x_j/x_i+h(x)\big)=D_i\big(x_i/x_j+h(x)\big),
\]
or
\begin{equation}\label{2.71}
\frac{1}{x_i}+D_j h(x)= \frac{1}{x_j}+D_ih(x).
\end{equation}
We also have
\[
D_j\big(x_\ell/x_k+h(x)\big)=D_k\big(x_i/x_j+h(x)\big),
\]
so $D_j h(x)=D_k h(x)$, and similarly, $D_i h(x)=D_k h(x)$. Hence $D_i h(x)=D_j h(x)$,
which together with \eqref{2.71} implies that
\[
\frac{1}{x_i}= \frac{1}{x_j},
\]
for all $x\in\Delta^n$, which is a contradiction. Hence, there can be no cyclical factors of order 2.

Suppose now that one of the cyclical factors of $p$ is of the form \eqref{6.1} with $2<m\le n$, so
\[
\p_{i_1}(t)=\m_{i_2}(t),\p_{i_2}(t)=\m_{i_3}(t),\ldots,\p_{i_{m-1}}(t)=\m_{i_m}(t),\p_{i_m}(t)=\m_{i_1}(t),
\]
for $t\ge0$. In this case the function $f:U\to\R$ of the proposition satisfies
\[
f_{i_1}(x)=x_{i_2},f_{i_2}(x)=x_{i_3},\ldots,f_{i_m}(x)=x_{i_1},
\]
so we need to find $h:U\to\R$ as in \eqref{2.6} such that
\[
D_{i_{k+1}}\big(f_{i_k}(x)/x_{i_k}+h(x)\big)=D_{i_k}\big(f_{i_{k+1}}(x)/x_{i_{k+1}}+h(x)\big),\quad\text{ for }k=1,\ldots,m-1,
\]
and
\[
D_{i_{1}}\big(f_{i_m}(x)/x_{i_m}+h(x)\big)=D_{i_m}\big(f_{i_{1}}(x)/x_{i_{1}}+h(x)\big).
\]
From this we have,
\begin{equation}\label{2.72}
D_{i_{k+1}}\big(x_{i_{k+1}}/x_{i_k}+h(x)\big)=D_{i_k}\big(x_{i_{k+2}}/x_{i_{k+1}}+h(x)\big),
\end{equation}
for $k=1,\ldots,m-2$, with
\begin{equation}\label{2.73}
D_{i_{m}}\big(x_{i_m}/x_{i_{m-1}}+h(x)\big)=D_{i_{m-1}}\big(x_{i_{1}}/x_{i_{m}}+h(x)\big),
\end{equation}
and
\begin{equation}\label{2.74}
D_{i_{1}}\big(x_{i_1}/x_{i_{m}}+h(x)\big)=D_{i_{m}}\big(x_{i_{2}}/x_{i_{1}}+h(x)\big).
\end{equation}
If we carry out the differentiations and then add up each side of equations \eqref{2.72}, \eqref{2.73}, and  \eqref{2.74}, the $D_{i_k} h(x)$ terms cancel and we find that
\[
\frac{1}{x_1}+\cdots+\frac{1}{x_m}=0,
\]
for all $x\in\Delta^n$, which is a contradiction. Hence, there can be no cyclical factors of order $m>2$, so only the identity permutation remains, and this generates the market portfolio.
\end{proof}

We would like to characterize the behavior of permutation-weighted portfolios of markets $\{X_1,\ldots,X_n\}$ with $n>2$, but we have now seen that we cannot use portfolio generating functions to do so. It happens that the decomposition \eqref{2.1} can be extended to general portfolios $\p$ with weight functions $\p_i$ that are continuously differentiable, and we consider this in the next section.

\section{Decomposition of portfolio log-return} \label{dec}

In order to generalize the decomposition \eqref{2.1} to a wider class of portfolios we need to consider both \ito\ integration and Stratonovich integration. Details regarding the relationship between these two forms of stochastic integration can be found in Chapter~V of  \citet{protter:1990}. For  continuous square-integrable semimartingales $X$ and $Y$ the {\em Stratonovich integral,} with the differential denoted by  $\phantom{\!\!}\dd$, is defined by
\begin{equation}\label{4}
\intT Y(t)\dd X(t) \eqdef \intT Y(t)\,dX(t)+ \half \,\brac{X,Y}_T,
\end{equation}
for $T\ge0$, where $\brac{X,Y}_t$ is the cross-variation process for $X$ and $Y$. It will be convenient to replace the integral in  \eqref{4} with the differential notation, so that
\begin{equation}\label{5}
Y(t)\dd X(t) =  Y(t)\,dX(t) + \half \,d\brac{X,Y}_t,\as
\end{equation}
It follows from \eqref{5} that for continuous semimartingales the difference between an \ito\ integral and the corresponding Stratonovich integral will be a process of locally bounded variation.

For a $C^2$ function $F$ defined on the range of $X$, the Stratonovich integral satisfies the rules of standard calculus. By \ito's rule we have
\begin{equation*} 
dF(X(t))=  F'(X(t))\,d X(t) + \half  F''(X(t))\,d\brac{X}_t,\as,
\end{equation*}
where $\brac{X}_t$ is the quadratic variation of $X$, and since
\[
d\brac{F'(X),X}_t = F''(X(t))\,d\brac{X}_t,\as,
\]
it follows from \eqref{5} that 
\begin{equation*} 
dF(X(t)) = F'(X(t))\dd X(t),\as
\end{equation*}
(see \citet{protter:1990},  Theorem~V.20).

\begin{defn}\label{D2} Let $\p$ be a portfolio with weight functions $\p_i$, for $i=1,\ldots,n$, that are continuous semimartingales. The log-return of the portfolio relative to the market can be decomposed as
\begin{equation}\label{7.1}
 d\log\big(Z_\p(t)/Z_\m(t)\big) =d\log\Scal_\p(t) +d\T_\p(t),\as,
\end{equation}
where the {\em structural process} $\Scal_\p$ of $\p$ is defined by
\begin{equation*} 
d\log\Scal_\p(t)\eqdef \sumi \p_i(t)\dd\log \m_i(t),
\end{equation*}
and the {\em trading process} $\T_\p$ is defined by
\begin{equation}\label{9}
d\T_\p(t)\eqdef d\log\big(Z_\p(t)/Z_\m(t)\big)-d\log\Scal_\p(t).
\end{equation}
\end{defn}
\vspace{5pt}

The motivation for the terms ``structural process'' and ``trading process'' can be found in \citet{F:2016}.
\begin{prop} Let $\p$ be a portfolio with weight functions $\p_i$, for $i=1,\ldots,n$, that are continuous semimartingales. Then the trading process $\T_\p$ is of locally bounded variation.
\end{prop}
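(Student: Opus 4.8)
The plan is to write $d\T_\p(t)$ out explicitly and recognize it as a sum of processes that are manifestly of locally bounded variation. The key point is that the genuinely stochastic part of $d\log\Scal_\p$ is exactly the \ito\ integral $\sumi\p_i\,d\log\m_i$ that already appears in the relative log-return formula \eqref{1.4}, so it cancels out of the trading process, leaving only finite-variation terms.

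Concretely, first I would apply the \ito--Stratonovich conversion \eqref{5} to each summand of $d\log\Scal_\p$. Since each $\p_i$ is a continuous semimartingale by hypothesis and each $\log\m_i$ is a continuous semimartingale (being the logarithm of the strictly positive continuous semimartingale $\m_i=X_i/X$),
\[
\p_i(t)\dd\log\m_i(t)=\p_i(t)\,d\log\m_i(t)+\half\,d\brac{\p_i,\log\m_i}_t,\as,
\]
so that $d\log\Scal_\p(t)=\sumi\p_i(t)\,d\log\m_i(t)+\half\sumi d\brac{\p_i,\log\m_i}_t$. Subtracting this from \eqref{1.4} and using the definition \eqref{9}, the common \ito\ integral $\sumi\p_i\,d\log\m_i$ cancels and we obtain
\[
d\T_\p(t)=\g^*_\p(t)\,dt-\half\sumi d\brac{\p_i,\log\m_i}_t,\as
\]

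It then remains only to check that the right-hand side is of locally bounded variation. For the first term, by \eqref{1.2} and the definitions of $\s_{ii}$ and $\s^2_\p$ one has $\g^*_\p(t)\,dt=\half\sumi\p_i(t)\,d\brac{\log X_i}_t-\half\,d\brac{\log Z_\p}_t$, and each piece here is of locally bounded variation, since $\brac{\log X_i}$ and $\brac{\log Z_\p}$ are nondecreasing and the continuous process $\p_i$ is locally bounded. The second term is a finite linear combination of cross-variation processes of continuous semimartingales, each of which is of locally bounded variation (see \citet{protter:1990}). Hence $\T_\p$ is of locally bounded variation. There is no serious obstacle in the argument; the only thing requiring care is the bookkeeping in the \ito--Stratonovich correction — in particular the observation that it is precisely the $\sumi\p_i\,d\log\m_i$ term that is removed when $d\log\Scal_\p$ is subtracted from $d\log(Z_\p/Z_\m)$ — together with the routine verification that the two finite-variation pieces on the right-hand side are well defined.
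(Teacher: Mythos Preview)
Your argument is correct and follows essentially the same route as the paper: expand $d\log\Scal_\p$ via the \ito--Stratonovich correction \eqref{5}, cancel the common \ito\ integral $\sumi\p_i\,d\log\m_i$ against the one in \eqref{1.4}, and observe that what remains is the excess-growth term plus a sum of cross-variation processes, both of locally bounded variation. Your formula $d\T_\p(t)=\g^*_\p(t)\,dt-\half\sumi d\brac{\p_i,\log\m_i}_t$ is in fact the correct one; the paper's displayed line \eqref{10} drops the factor $-\half$, which is immaterial for the conclusion but worth noting.
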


\begin{proof}  From \eqref{9} we have
\begin{align}
d\T_\p(t)&=d\log\big(Z_\p(t)/Z_\m(t)\big)-d\log\Scal_\p(t)\notag\\
&= \sumi\p_i(t)\,d\log\m_i(t)+\g^*_\p(t)\,dt-\sumi\p_i(t)\dd\log\m_i(t)\notag\\
&= \Big(\sumi\p_i(t)\,d\log\m_i(t)-\sumi\p_i(t)\dd\log\m_i(t)\Big)+\g^*_\p(t)\,dt\notag\\
&= \sumi d\bbrac{\p_i(t),\log\m_i(t)}_t+\g^*_\p(t)\,dt,\as \label{10}
\end{align}
Since the portfolio weight processes $\p_i$ are bounded continuous semimartingales, the cross-variation terms in \eqref{10} are of locally bounded variation, and since the excess growth term is also of locally bounded variation,  so will $\T_\p$ be.
\end{proof}

We can now show that for functionally generated portfolios the decompositions \eqref{2.1} and \eqref{7.1} coincide.

\begin{prop}\label{P4.3} Let $\p$ be the portfolio generated by the positive $C^2$ function $\S$ with drift process $\Theta$. Then
\[
d\log\Scal_\p(t) = d\log\S(\m(t)),\as,
\]
and
\begin{equation*} 
d\T_\p(t)= d\Theta(t),\as
\end{equation*}
\end{prop}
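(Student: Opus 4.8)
The plan is to carry out the computation entirely in Stratonovich form, where the chain rule of ordinary calculus holds and no \ito\ correction terms appear, and to match $d\log\Scal_\p(t)$ against $d\log\S(\m(t))$ term by term; the only input special to functionally generated portfolios is the explicit weight formula \eqref{2.2}. Writing $c(t)\eqdef 1-\sumj\m_j(t)D_j\log\S(\m(t))$, which is a single scalar process, \eqref{2.2} reads $\p_i(t)=D_i\log\S(\m(t))\,\m_i(t)+c(t)\,\m_i(t)$, a.s., for each $i$, so I expect the argument to come down to one use of the Stratonovich change-of-variables formula plus the simplex constraint $\sumi\m_i(t)\equiv1$.

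First I would apply the $n$-dimensional Stratonovich change-of-variables formula to $\log\S$ (which is $C^2$, as $\S$ is positive and $C^2$), giving $d\log\S(\m(t))=\sumi D_i\log\S(\m(t))\dd\m_i(t)$, and then use the one-dimensional rule $dF(\log\m_i(t))=F'(\log\m_i(t))\dd\log\m_i(t)$ with $F(x)=e^x$, i.e.\ $\dd\m_i(t)=\m_i(t)\dd\log\m_i(t)$, to rewrite this as $d\log\S(\m(t))=\sumi D_i\log\S(\m(t))\,\m_i(t)\dd\log\m_i(t)$. Subtracting this from the definition of $\Scal_\p$ and using \eqref{2.2},
\[
d\log\Scal_\p(t)-d\log\S(\m(t))=\sumi\big(\p_i(t)-D_i\log\S(\m(t))\,\m_i(t)\big)\dd\log\m_i(t)=\sumi c(t)\,\m_i(t)\dd\log\m_i(t),
\]
so the first identity follows once the right-hand side is shown to vanish. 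For that I would again replace $\m_i(t)\dd\log\m_i(t)$ by $\dd\m_i(t)$ and use linearity of the Stratonovich integral in the integrator to get $\sumi c(t)\,\m_i(t)\dd\log\m_i(t)=\sumi c(t)\dd\m_i(t)=c(t)\dd\big(\sumi\m_i(t)\big)=0$, the last equality because $\sumi\m_i(t)\equiv1$ is constant and hence has vanishing Stratonovich differential. This gives $d\log\Scal_\p(t)=d\log\S(\m(t))$, a.s.

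The second identity is then immediate: by the defining relation \eqref{2.1} of the generating function, $\log\big(Z_\p(t)/Z_\m(t)\big)=\log\S(\m(t))+\Theta(t)$, so, differentiating and using the definition \eqref{9} of the trading process together with the identity just proved, $d\T_\p(t)=d\log\big(Z_\p(t)/Z_\m(t)\big)-d\log\Scal_\p(t)=d\Theta(t)$, a.s. The step that really needs care — and the main potential obstacle — is the legitimacy of the Stratonovich calculus in the forms invoked: for the Stratonovich integrals and the cross-variations in \eqref{5} to be defined, the integrands $D_i\log\S(\m(t))$ and $c(t)$ must be continuous semimartingales (this is essentially the content of the standing assumption in Definition~\ref{D2} that the $\p_i$ are continuous semimartingales, and it is automatic under a slightly stronger smoothness assumption on $\S$ than $C^2$), and one must check that the multidimensional change-of-variables formula applies on a neighborhood of $\D^n$. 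A less elegant alternative would avoid Stratonovich integration altogether, expanding the cross-variation terms $\sumi d\bbrac{\p_i,\log\m_i}_t$ in \eqref{10} by means of \eqref{2.2} and comparing the outcome with the explicit drift \eqref{2.3}.
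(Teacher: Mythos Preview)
Your argument is correct and is essentially the paper's own proof: both apply the Stratonovich chain rule to $\log\S(\m(t))$, rewrite $\dd\m_i(t)=\m_i(t)\dd\log\m_i(t)$, and then use \eqref{2.2} together with $\sumi\m_i(t)\dd\log\m_i(t)=\sumi\dd\m_i(t)=0$ to identify the result with $d\log\Scal_\p(t)$, after which $d\T_\p(t)=d\Theta(t)$ follows from \eqref{2.1} and \eqref{9}. Your caveat about needing the $D_i\log\S(\m(t))$ to be continuous semimartingales is well taken; the paper simply asserts this, and your observation that it is automatic under slightly more smoothness on $\S$ (or can be circumvented via the \ito\ expansion in \eqref{10} and \eqref{2.3}) is a fair remark rather than a gap.
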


\begin{proof}  Since, for $i=1,\ldots,n$, the terms $D_i\log\S(\m(t))$ are continuous semimartingales, we can apply Stratonovich integration, with
\begin{align}
d\log\S(\m(t))&=\sumi D_i\log\S(\m(t))\dd\m_i(t)\notag\\
&=\sumi D_i\log\S(\m(t))\m_i(t)\dd\log \m_i(t)\notag\\ 
&=\sumi \p_i(t)\dd\log\m_i(t)\label{15} \\
&= d\log\Scal_\p(t),\as,\label{16} 
\end{align} 
by Definition \ref{D2}, where \eqref{15} follows from \eqref{2.2} and the fact that
\[
\sumi \m_i(t)\dd\log\m_i(t)=\sumi d\m_i(t)=d\sumi \m_i(t)=0,\as
\]
With \eqref{16} established, 
\begin{equation*}
d\T_\p(t)= d\Theta(t),\as,
\end{equation*}
follows directly from the definitions \eqref{2.1} and \eqref{9}. 
\end{proof}

Proposition~\ref{P4.3} shows that Definition~\ref{D2} extends the decomposition \eqref{2.1} of the relative log-return for functionally generated portfolios to general portfolios with weights that are continuous semimartingales, and this latter class includes permutation-weighted portfolios. However, while for certain types of generating function $\S$ it is possible to describe the  behavior of the components $\log\S$ and $\Theta$ of a functionally generated portfolio, we know of no simple characterization of the components $\log\Scal_\p$ and $\T_\p$ defined for more general portfolios. 

\begin{exa} Consider the market $\{X_1,\ldots,X_n\}$, with $n>2$, and let $p\in\Sigma_n$ be a permutation other than the identity. Let $\p$ be the permutation-weighted portfolio with $\p_i=\m_{p(i)}$, for $n=1,\ldots,n$. Then we can calculate the components of the decomposition \eqref{7.1}, with
\begin{align}
d\log\Scal_\p(t)&= \sumi \p_i(t)\dd\log \m_i(t)\notag\\
&=\sumi \m_{p(i)}(t)\dd\log \m_i(t),\as,\label{p6-1}
\end{align}
and, 
\begin{align*}
d\T_\p(t)&= d\log\big(Z_\p(t)/Z_\m(t)\big)-d\log\Scal_\p(t)\notag\\
&= \sumi\m_{p(i)}(t)\,d\log\m_i(t)+\g_\p^*(t)\,dt-\sumi \m_{p(i)}(t)\dd\log \m_i(t)\notag\\
\end{align*}

\vspace{-12pt}
\noindent We can now apply \eqref{4} to obtain
\begin{align}
d\T_\p(t)&= -\half\sumi d\brac{\m_{p(i)},\log\m_i}_t+\g_\p^*(t)\,dt\notag\\
&= -\half\sumi\m_{p(i)}(t)\t_{p(i) i}(t)\,dt+\g_\p^*(t)\,dt,\as\label{p6-2}
\end{align}
\end{exa}

The expressions in \eqref{p6-1} and \eqref{p6-2} do not appear to be amenable to further analysis, so although the decomposition \eqref{7.1} can be applied to the permutation-weighted portfolio $\p$, this does not allow us to extend Proposition~\ref{P1} to include  markets with $n>2$. Hence, to understand the behavior of permutation-weighted portfolios in markets with more than two assets, we must rely on other techniques. Accordingly, in the next section we consider {\em rank-based} methods in which the market assets are studied in terms of rank rather than name, or index. 

\section{First-order approximations to asymptotically stable markets} \label{fom}

In this section we consider portfolio behavior in terms of rank, so we shall introduce the basic concepts of rank-based analysis for systems of continuous semimartingales (see, e.g., \citet{F:2002}). To begin, for $t \in [0,\infty)$, let $r_t\in\Sigma_n$, the symmetric group on $n$ elements, be the rank function for $X_1(t),\ldots,X_n(t)$, with $r_t(i)<r_t(j)$ if $X_i(t)>X_j(t)$ or if $X_i(t)=X_j(t)$ and $i<j$. The corresponding {\em rank processes} $X_{(1)}\ge\cdots\ge X_{(n)}$  are defined by $X_{(r_t(i))}(t)=X_i(t)$. We have assumed that the semimartingales $X_i(t)$ are strictly positive, so we can consider the logarithmic processes $\log X_1,\ldots,\log X_n$. For $1\le k< \ell \le n$, let $\lt{k,\ell}^X$ denote  the local time at the origin for $\log X_{(k)}-\log X_{(\ell)}$, with $\lt{0,1}^X=\lt{n,n+1}^X\equiv 0$ (see \citet{Karatzas/Shreve:1991}, Section~3.7).

We are interested in the asymptotic behavior of portfolios relative to the market, so it is reasonable that we restrict our attention to markets with some level of asymptotic stability. 

\begin{defn} \label{D4.1}
\citep{F:2002} The family  $\XX$  of strictly positive continuous square-integrable semimartingales is {\em asymptotically stable} if
\begin{enumerate}
\item $\displaystyle
     \limt{1}\big( \log X_{(1)}(t)-\log X_{(n)}(t)\big)=0,\as$ ({\em coherence});
\item $\displaystyle
  \l_{k,k+1}    =       \limt{1}\L_{k,k+1}^X(t) >0,\as$;
\item $\displaystyle
\s^2_{k,k+1}=      \limt{1}\bbrac{\log \big(X_{(k)}/X_{(k+1)}\big)}_t >0,\as$;
\end{enumerate}
 for $k=1,\ldots,n-1$, where $\L_{k,k+1}^X$ is the local time at the origin for $\log\big(X_{(k)}/X_{(k+1)}\big)$. By convention, let $\l_{0,1}=\l_{n,n+1}=0$, $\s^2_{0,1}=\s^2_{1,2}$, and $\s^2_{n,n+1}=\s^2_{n-1,n}$.
\end{defn}
\vspace{5pt}

One of the simplest examples of a market that satisfies this definition is a {\em first-order model,} and these models can be used to approximate the long-term behavior of more general asymptotically stable markets. First-order models retain certain important characteristics of actual markets. 

\begin{defn} \label{D3.1}  \citep{F:2002, BFK:2005} For $n>1$, a  {\em first-order model} is a family of strictly positive continuous semimartingales $\XX$ defined by
\begin{equation}\label{3.2}
d\log X_i(t) = g_{r_t(i)} \, dt+\s_{r_t(i)} \, dW_i(t),
\end{equation}
where $\s_1,\ldots,\s_n$ are positive constants, $g_1,\ldots,g_n$  are constants satisfying
\begin{equation}\label{3.3}
 g_1+\cdots+g_n = 0 \quad\text{ and }\quad g_1+\cdots+g_k < 0 \text{ for } k < n,
\end{equation}
and $(W_1, \ldots, W_n)$ is a Brownian motion.
\end{defn}  
\vspace{5pt}
 
The first order model $\XX$ defined by \eqref{3.2} is asymptotically stable with 
\begin{equation}\label{4.1}
 \l_{k,k+1} = \limt{1}\L_{k,k+1}(t) =-2\big(g_1+\cdots+g_k\big),\as,
\end{equation}
for $k=1,\ldots,n-1$, and
\begin{equation}\label{4.2}
\s^2_{k,k+1}=      \limt{1}\bbrac{\log X_{(k)}-\log X_{(k+1)}}_t= \s^2_k+\s^2_{k+1},\as,
\end{equation}
for $k=1,\ldots,n-1$ (see Section~3 of \citet{BFK:2005}). 

By Proposition~2.3 of \citet{BFK:2005}, each of the processes $X_i$ in a  first-order-model asymptotically spends equal time in each rank and hence has zero asymptotic log-drift. For a portfolio $\p$ in a market $\XX$ represented by a first-order model, the {\em portfolio growth rate} $\g_\p$ will satisfy
\begin{align}
\g_\p(t)&= \sumi \p_i(t)g_{r_t(i)}+\g^*_\p(t)\notag\\
 &= \sumk \p_{p_t(k)}(t)g_k+\g^*_\p(t),\as,\label{3.4}
\end{align}
where $p_t\in\Sigma_n$ is the inverse permutation to the rank function $r_t$. By Proposition~1.3.1 of \citet{F:2002}, the value process $Z_\p$ of $\p$ satisfies
\[
\limt{1}\Big(Z_\p(t)-\intt\g_\p(s) \, ds\Big)=0,\as
\]

It would seem reasonable that if we swapped the weights of more than one pair of assets, the resulting portfolio would also grow faster than the market. For a market $\XX$, the {\em reverse-weighted portfolio} is the portfolio $\p$ with weight processes
\begin{equation}\label{3.1}
\p_i(t)=\m_{(n+1-r_t(i))}(t),
\end{equation}
for $i=1,\ldots,n$. Note that a reverse-weighted portfolio is distinct from a reciprocal-weighted portfolio in which $\p_i(t) \propto \m^{-1}_i(t)$, for $i=1,\ldots,n$, a case which is discussed in \citet{Vervuurt/Karatzas:2015}.

For $n=2$, the swap portfolio of Proposition~\ref{P1} is the reverse-weighted portfolio, and we have seen that this reverse-weighted portfolio asymptotically outperforms the market. 
Due to Proposition~\ref{P2.1}, for $n > 2$ we cannot represent reverse-weighted portfolios using generating functions. Hence, we introduce the additional structure of a first-order model in order to characterize the long-term behavior of these portfolios.

\begin{prop}\label{P3.1} Suppose that $n>1$ and $\XX$ is  a first-order model \eqref{3.2} for which the excess growth rate of the reverse-weighted portfolio \eqref{3.1} satisfies $\g^*_\p(t)\ge\g^*_\m(t)$,~a.s., for $t\in[0,\infty)$. Then the growth rate of the reverse-weighted portfolio will be greater than that of the market,
\begin{equation}\label{3.5}
 \g_\p(t) > \g_\m(t),\as,
\end{equation}
for $t\in[0,\infty)$, except on a set of Lebesgue measure zero.
\end{prop}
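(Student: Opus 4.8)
\emph{Proof proposal.} The approach is to combine the growth‑rate identity \eqref{3.4}, valid in any first‑order model, with an elementary rearrangement inequality, and then to dispose of the exceptional times by the occupation‑times formula. First I would express both growth rates through ranks: writing $p_t\in\Sigma_n$ for the inverse of the rank permutation $r_t$, we have $\m_{p_t(k)}(t)=X_{(k)}(t)/X(t)=\m_{(k)}(t)$ and, for the reverse‑weighted portfolio \eqref{3.1}, $\p_{p_t(k)}(t)=\m_{(n+1-r_t(p_t(k)))}(t)=\m_{(n+1-k)}(t)$. Substituting into \eqref{3.4} gives
\[
\g_\m(t)=\sumk\m_{(k)}(t)\,g_k+\g^*_\m(t),\qquad \g_\p(t)=\sumk\m_{(n+1-k)}(t)\,g_k+\g^*_\p(t),
\]
so, using the hypothesis $\g^*_\p(t)\ge\g^*_\m(t)$, it suffices to prove that $\sumk\big(\m_{(n+1-k)}(t)-\m_{(k)}(t)\big)g_k\ge0$, with strict inequality for all $t$ outside a set of Lebesgue measure zero.

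For the inequality, fix $t$, set $a_k=\m_{(n+1-k)}(t)-\m_{(k)}(t)$, and $G_k=g_1+\cdots+g_k$, so that $G_0=G_n=0$ and, by \eqref{3.3}, $-G_k>0$ for $1\le k\le n-1$. Since $\{\m_{(n+1-k)}(t)\}_k$ and $\{\m_{(k)}(t)\}_k$ are the same collection of numbers, $\sumk a_k=0$; and since $\m_{(1)}(t)\ge\cdots\ge\m_{(n)}(t)$, the increments $a_{k+1}-a_k=\big(\m_{(k)}(t)-\m_{(k+1)}(t)\big)+\big(\m_{(n-k)}(t)-\m_{(n+1-k)}(t)\big)$ are all nonnegative, i.e.\ $k\mapsto a_k$ is nondecreasing. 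Summation by parts together with $G_0=G_n=0$ then yields
\[
\sumk a_k g_k=\sum_{k=1}^{n-1}(a_{k+1}-a_k)(-G_k)\ \ge\ 0,
\]
each summand being a product of a nonnegative factor and a positive one. Because every $-G_k$ is strictly positive, the sum vanishes only if all $a_{k+1}-a_k=0$, i.e.\ $a_k$ is constant in $k$; with $\sumk a_k=0$ this forces $a_k\equiv0$, and then $a_1=\m_{(n)}(t)-\m_{(1)}(t)=0$ together with monotonicity forces $\m_{(1)}(t)=\cdots=\m_{(n)}(t)$. Hence $\g_\p(t)>\g_\m(t)$ at every time $t$ at which the market weights are not all equal.

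It remains to show that $N=\{t\ge0:\m_1(t)=\cdots=\m_n(t)\}$ has Lebesgue measure zero almost surely. Since $N\subseteq\{t:Y(t)=0\}$ for the top‑gap process $Y=\log X_{(1)}-\log X_{(2)}$, it suffices to treat the latter. In a first‑order model the martingale part of $\log X_{(j)}$ accrues quadratic variation at the constant rate $\s_j^2$, and distinct names occupy ranks $1$ and $2$ at every instant, so $d\brac{Y}_t=(\s_1^2+\s_2^2)\,dt$ (cf.\ \eqref{4.2}). By the occupation‑times formula (\citet{Karatzas/Shreve:1991}, Section~3.7),
\[
(\s_1^2+\s_2^2)\int_0^T\1_{\{Y(t)=0\}}\,dt=\int_0^T\1_{\{Y(t)=0\}}\,d\brac{Y}_t=\int_\R\1_{\{a=0\}}\,L^a_T(Y)\,da=0
\]
for every $T>0$, so $N$ is Lebesgue‑null almost surely. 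Assembling the three steps gives \eqref{3.5} for all $t$ outside a Lebesgue‑null set, a.s.

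The first two steps are routine once the right objects are introduced; the only point requiring care is the last one, where the structure of the first‑order model is used to identify the quadratic variation of the top gap $\log X_{(1)}-\log X_{(2)}$ (or at least a positive lower bound for its density) so that the occupation‑times formula applies and forces the collision set to be negligible.
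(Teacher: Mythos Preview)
Your proof is correct and follows essentially the same approach as the paper: express $\g_\m$ and $\g_\p$ in rank form via \eqref{3.4}, reduce to $\sum_k(\m_{(n+1-k)}-\m_{(k)})g_k$, and apply summation by parts against the strictly negative partial sums $G_k=g_1+\cdots+g_k$. The paper simply asserts that the increments $\ph_{k+1}-\ph_k$ are strictly positive off a Lebesgue-null set, whereas you go further and justify the null-set claim explicitly (showing equality forces all weights equal, then invoking the occupation-times formula for the top gap); this is a welcome bit of extra rigor but not a different argument.
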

\begin{proof} For the first-order model $\XX$, it follows from \eqref{3.4} that  the market growth rate is
\begin{align*}
\g_\m(t) &= \sumi\m_i(t)g_{r_t(i)} +\g^*_\m(t)\\
&= \sumk \m_{(k)}(t)g_k +\g^*_\m(t),\as
\end{align*}
Similarly, the growth rate for the reverse-weighted portfolio $\p$ is
\begin{align*}
\g_\p(t) &= \sumi\p_i(t)g_{r_t(i)}+\g^*_\p(t)\\
&= \sumk \m_{(n+1-k)}(t)g_k + \g^*_\p(t),\as
\end{align*}
 Hence,
\begin{align*}
\g_\p(t)-\g_\m(t) &= \sumk \big(\m_{(n+1-k)}(t)-\m_{(k)}(t)\big)g_k+\g^*_\p(t)-\g^*_\m(t)\notag\\
 &\ge \sumk \big(\m_{(n+1-k)}(t)-\m_{(k)}(t)\big)g_k\notag\\
&= \sumk \ph_k(t)g_k, \as, 
\end{align*}
where $\ph_k(t) = \m_{(n+1-k)}(t)-\m_{(k)}(t)$. From the definition of rank it follows that $\ph_{k+1}(t) > \ph_k(t)$,~a.s., for $k=1,\ldots,n-1$ and $t\in[0,\infty)$ except on a set of Lebesgue measure zero. With this inequality and \eqref{3.3}, we can apply summation by parts and obtain 
\begin{align*}
\sumk \ph_k(t)g_k &= \ph_1(t) \sumk g_k + \sum_{k=1}^{n-1} \big(\ph_{k+1}(t)-\ph_k(t)\big) \sum_{\ell=k+1}^n g_\ell \notag\\
&> 0,\as,
\end{align*}
for $t\in[0,\infty)$ except on a set of Lebesgue measure zero, and \eqref{3.5} follows.
\end{proof}

We can apply this to a first-order model with rank-symmetric variance parameters, $\s^2_k=\s^2_{n+1-k}$.

\begin{cor}\label{C3.1} Suppose that $\XX$ is  a first-order model \eqref{3.2} for which $\s^2_k=\s^2_{n+1-k}>0$, for $k=1,\ldots,n$. Then the growth rate of the reverse-weighted portfolio $\p$ will be greater than that of the market,
\begin{equation*}
\g_\p(t)>\g_\m(t),\as,
\end{equation*}
for $t\in[0,\infty)$, except on a set of Lebesgue measure zero.
\end{cor}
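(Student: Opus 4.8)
The plan is to verify the single hypothesis of Proposition~\ref{P3.1}, namely that the reverse-weighted portfolio's excess growth rate dominates the market's; in fact, under rank-symmetry of the variances these two excess growth rates will turn out to be exactly equal, so the corollary follows at once from Proposition~\ref{P3.1}.

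First I would record the covariance structure of the first-order model \eqref{3.2}. Since $(W_1,\ldots,W_n)$ is a Brownian motion, $\s_{ij}(t)\,dt=d\brac{\log X_i,\log X_j}_t=\s_{r_t(i)}\s_{r_t(j)}\,d\brac{W_i,W_j}_t$, which vanishes for $i\ne j$ and equals $\s^2_{r_t(i)}\,dt$ for $i=j$, a.s. Substituting this diagonal structure into the excess growth rate formula \eqref{1.101} collapses the double sum to a single one: for a portfolio with weight processes $w_1,\ldots,w_n$,
\[
\g^*(t)=\half\sumi w_i(t)\big(1-w_i(t)\big)\s^2_{r_t(i)},\as,
\]
and reindexing by rank via the inverse permutation $p_t$ of $r_t$ rewrites this as $\half\sumk w_{p_t(k)}(t)\big(1-w_{p_t(k)}(t)\big)\s^2_k$.

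Next I would specialize to the two portfolios of interest. For the market, $\m_{p_t(k)}(t)=\m_{(k)}(t)$, so $\g^*_\m(t)=\half\sumk \m_{(k)}(t)\big(1-\m_{(k)}(t)\big)\s^2_k$; for the reverse-weighted portfolio \eqref{3.1}, $\p_{p_t(k)}(t)=\m_{(n+1-r_t(p_t(k)))}(t)=\m_{(n+1-k)}(t)$, so $\g^*_\p(t)=\half\sumk \m_{(n+1-k)}(t)\big(1-\m_{(n+1-k)}(t)\big)\s^2_k$. Replacing the summation index $k$ by $n+1-k$ in the last expression and then invoking the hypothesis $\s^2_k=\s^2_{n+1-k}$ shows $\g^*_\p(t)=\g^*_\m(t)$, a.s., for every $t\ge0$. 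In particular $\g^*_\p(t)\ge\g^*_\m(t)$, so Proposition~\ref{P3.1} applies and delivers $\g_\p(t)>\g_\m(t)$ for $t\in[0,\infty)$ outside a set of Lebesgue measure zero.

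The computation is short and there is no genuine obstacle; the two points worth a moment's care are that I would work with the $\s_{ij}$-form of the excess growth rate rather than the $\t_{ij}$-form (they coincide, as noted just after \eqref{1.101}, and the first-order model has diagonal $\s_{ij}$ but not diagonal $\t_{ij}$), and that the rank-reversal built into the weights \eqref{3.1} must be matched precisely against the index substitution $k\mapsto n+1-k$ applied to $\s^2_k$. Rank-symmetry of the variance parameters is exactly the condition making the market's excess growth rate invariant under reversing its weights by rank, which is why the hypothesis of Proposition~\ref{P3.1} holds here, and holds with equality.
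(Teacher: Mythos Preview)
Your proof is correct and follows exactly the paper's approach: you show that rank-symmetry of the variances forces $\g^*_\p(t)=\g^*_\m(t)$ and then invoke Proposition~\ref{P3.1}. The paper's proof is the two-sentence version of your argument, referring to \eqref{1.101} and the rank-reversal of both weights and variances without writing out the diagonal-$\s_{ij}$ computation explicitly.
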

\begin{proof} Since both the weights \eqref{3.1}  and the variances  $\s^2_k=\s^2_{n+1-k}$ are reversed by rank, we see from \eqref{1.101} that the excess growth rates  $\g^*_\p(t)=\g^*_\m(t)$, a.s., for $t\in[0,\infty)$.  Hence, the corollary follows from Proposition~\ref{P3.1}.
\end{proof}

First-order models are too restrictive to be used as universal market models, however asymptotically stable markets  can often be approximated by first-order models, at least over the long term. 

\begin{defn}\label{D4.2}\citep{F:2002} 
Let $\{\BX_1,\ldots,\BX_n\}$ be an asymptotically stable  family of strictly positive continuous semimartingales with parameters $\bl_{k,k+1}$ and $\bs^2_{k,k+1}$, for $k=1,\ldots,n$, defined as in 2 and 3 of Definition~\ref{D4.1}. Then the {\em first-order approximation} for $\{\BX_1,\ldots,\BX_n\}$ is the first-order model  $\{X_1,\ldots,X_n\}$ with
\begin{equation}\label{4.4}
d\log X_i(t) = g_{r_t(i)} \, dt+\s_{r_t(i)} \, dW_i(t),
\end{equation}
for $i=1,\ldots,n$, where $r_t\in\Sigma_n$ is the rank function for the $X_i$, the parameters $g_k$ and $\s_k$ are defined by
\begin{equation}\label{4.5}
\begin{split}
g_k  &= \half \bl_{k-1,k} -  \half \bl_{k,k+1},\text{ for } k=1,\ldots,n, \\
\s_k^2 &= \frac{1}{4}\big(\bs^2_{k-1,k}+\bs^2_{k,k+1}\big),\text{ for } k=1,\ldots,n, 
\end{split}
\end{equation}
$\s_k$ is the positive square root of $\s^2_k$, and $(W_1,\ldots,W_n)$ is a Brownian motion.
\end{defn}\vspace{5pt}

For the first-order model \eqref{4.4} with parameters \eqref{4.5}, equations \eqref{4.1} and \eqref{4.2} imply that
\begin{equation*} 
 \l_{k,k+1}=-2\big(g_1+\cdots+g_k\big)= \bl_{k,k+1},\as,
\end{equation*}
for $k=1,\ldots,n-1$, and
\begin{equation*} 
\s^2_{k,k+1}=\s^2_k+\s^2_{k+1}
= \frac{1}{4}\big(\bs^2_{k-1,k}+2\bs^2_{k,k+1}+\bs^2_{k+1,k+2}\big),\as,
\end{equation*}
for $k=1,\ldots,n-1$. 

If the behavior of the first-order approximation $\XX$ is close enough to that of the original market $\BXX$, then we can draw conclusions about portfolio behavior in $\BXX$ from the behavior of the corresponding portfolio in $\XX$, at least if the portfolio is {\em na\"ive} in the sense that sophisticated asset selection is not involved (this idea is developed in \citet{Arnott:2013} and  \citet{JIC:2019}). If the first-order approximation behaves enough like $\BXX$,  and if either the hypotheses of   Proposition~\ref{P3.1} or those of  Corollary~\ref{C3.1} hold, at least approximately, then the reverse-weighted portfolio will probably have a higher growth rate than the market portfolio. In the next section we apply these methods to a set of empirical data.

\section{Reverse-weighted portfolios of commodity futures} \label{data}

We examine the applicability of our theoretical results using monthly historical futures prices data for 26 commodities from 1977-2018. Table~\ref{commInfoTab} lists the start month and trading market for the 26 commodity futures contracts in our data set. We construct equal-weighted, price-weighted, and reverse price-weighted portfolios of commodity futures and examine their performance over this time period. Following \citet{Vervuurt/Karatzas:2015}, we also construct and examine the performance of the diversity-weighted portfolio of commodity futures with a negative diversity parameter of -0.5. For this application to commodity futures, the price-weighted portfolio corresponds to the market portfolio $\m$ defined in Section~\ref{markets}, and the reverse price-weighted portfolio corresponds to the reverse-weighted portfolio~\eqref{3.1} defined in Section~\ref{fom}. 

\subsection*{Implied commodity futures prices}
The most liquid commodity futures contracts are usually those with expiration dates approximately one or two months in the future. Accordingly, we use two-month commodity futures contracts to generate our data set whenever such prices exist. We define implied two-month futures prices to fill the gaps when the available data do not include actual two-month futures prices. From these implied prices we can define time series that can be approximated with a first-order model following the methodology of the previous section. 

Suppose that commodity $i$ has a futures contract with expiration date $\t \in \N$. For $t \in \N$, with $t \le \t$, let
\begin{equation*} 
 F_i(t, \t) \; = \; \text{futures price for commodity $i$ at time $t$ for the contract with expiration at} \; \t.
\end{equation*} 
In this case, $F_i(\t,\t)$ is the {\em spot price} for commodity $i$ at time $\t$. For $t \in \N$, let $\nu_2 > \nu_1 \ge 0$, $\nu_1,\nu_2 \ne 2$, be the smallest integers closest to two such that $t + \nu_1$ and $t + \nu_2$ are expiration dates of futures contracts for the $i$th commodity. We define the \emph{carry factor} for commodity $i$ at time $t$ to be
\begin{equation*} 
 \D_i(t) \eqdef \frac{\log F_i(t, t + \nu_2) - \log F_i(t, t + \nu_1)}{\nu_2 - \nu_1}.
\end{equation*}
Note that the carry factor $\D_i$ is commodity-specific and can vary over time, and it is calculated using futures contracts with expirations as close as possible, but not equal to, two months in the future. 

\begin{defn}\label{impliedFuture}
For $t \in \N$, let $\nu \ge 0$, be the smallest of the closest integers to two such that $t + \nu$ is an expiration date of a futures contract for the $i$th commodity. Then the {\em implied two-month futures price} at time $t$ for commodity $i$ is  
\begin{equation*} 
 \FF_i(t, t+2) \eqdef  e^{(2-\nu)\D_i(t)}F_i(t, t+\nu).
 \end{equation*} 
\end{defn}

We generate monthly time series $\BX_1(t),\ldots,\BX_n(t)$, for $t\in\N$, from the implied two-month futures prices of each of the $n=26$ commodities in our data set. We use these data to compare and rank the commodities over the 1977-2018 time period. In terms of Definition~\ref{impliedFuture}, we let
\begin{equation}\label{5.2}
 \BX_i(t) \eqdef  \FF_i(t, t+2),
\end{equation}
for $t=1,2,\ldots, 492$, and $i=1, \ldots, 26$. It follows from Definition~\ref{impliedFuture} and \eqref{5.2} that if $F_i(t,t+2)$ exists, then
\begin{equation*}
 \BX_i(t) = \FF_i(t, t + 2) = F_i(t, t + 2),
\end{equation*}
so that in this case the two-month futures price is equal to the two-month implied futures price.

In order to meaningfully compare and rank the implied two-month futures prices of the 26 different commodities in our data set, it is necessary to normalize these prices. We set the initial implied futures prices of all commodities with available futures contracts on the November 1968 data start month --- soybean meal, soybean oil, and soybeans --- equal to each other. All subsequent monthly price changes occur without modification, meaning that implied futures price dynamics are unaffected by our normalization. This method of normalizing prices is similar to \citet{Asness/Moskowitz/Pedersen:2013}, who rank commodity futures based on each commodity's current spot price relative to its average spot price 4.5 to 5.5 years in the past.

For those commodities that enter into our data set after November 1968, we set the initial implied futures log-price equal to the average log-price of those commodities already in our data set on that month. After a commodity enters into the data set with a normalized price, all subsequent monthly price changes occur without modification. The resulting normalized implied two-month futures prices for the 26 commodities in our data set are plotted in Figure~\ref{pricesFig}, with the log-prices reported relative to the average for all commodities in each month.

Figure \ref{pricesFig} suggests that implied two-month commodity futures prices are asymptotically stable. Although there is no formal test of asymptotic stability, the figure shows that no commodity drops out of the market during the 1968-2018 time period, consistent with part 1 of Definition \ref{D4.1} (coherence). Furthermore, the relative implied prices of differently ranked commodities appear approximately constant in Figure \ref{pricesFig}, which is consistent with parts 2 and 3 of Definition \ref{D4.1}. We confirm that the first-order approximation of the two-month implied commodity futures market provides an accurate description of this market below.

\subsection*{Commodity futures returns and market efficiency}
When forming equal-weighted, price-weighted, diversity-weighted, and reverse-weighted portfolios, we hold two-month futures contracts in all months in which such contracts exist. For those months in which there are no two-month futures contracts, we hold those contracts with the next expiration horizon greater than two months in the future. In both cases, the change in the implied two-month futures price of commodity $i$, $d \log \BX_i(t)$, is not necessarily equal to the return from holding the underlying commodity futures contract, $d \log F_i(t, \t)$, where $\t \geq t + 2$ is a futures contract expiration date. We refer to the difference between the log change in the implied two-month futures price and the return from holding the underlying futures contract as the {\em carry}. This carry satisfies
\begin{equation} \label{carryEq}
 C_i(t) \, dt = d \log F_i(t, \t) - d \log \BX_i(t),
\end{equation}
for $i = 1, \ldots, n$ and $t \in [0, \infty)$. 


The carry \eqref{carryEq} measures the gap between the returns from holding commodity futures contracts and changes in the implied futures prices. For a portfolio, the carry is just the weighted sum of the carries for each commodity futures contract, with the weights corresponding to the portfolio weights. Since the price-weighted, reverse-weighted, equal-weighted, and diversity-weighted portfolios weight each futures contract differently, the carry for each portfolio will also likely be different. In order for the commodity futures market to be efficient, the different carries for the portfolios must balance the different growth rates of the implied two-month futures prices in those portfolios in such a way that all portfolio returns are approximately equal to each other. 

More precisely, the first-order approximation of the family of two-month implied futures prices $\{\BX_1, \ldots, \BX_n\}$ features differential growth rates $g_k$ and variances $\s^2_k$. If these parameters satisfy the conditions of Proposition~\ref{P3.1} or Corollary~\ref{C3.1}, then the growth rate of the implied futures prices in the reverse-weighted portfolio $\g_{\p}$ will likely exceed that of the implied futures prices in the price-weighted portfolio $\g_{\m}$. Therefore, in order for these two portfolios to have equal log returns --- as expected for an efficient market --- it must be that the carry for the reverse-weighted portfolio is consistently below that for the price-weighted portfolio in a way that approximately balances the differential growth rates $\g_{\p}$ and $\g_{\m}$.

\subsection*{Portfolios of commodity futures}
We analyze and compare the performance of price-weighted, equal-weighted, diversity-weighted, and reverse price-weighted portfolios of commodity futures. The price-weighted portfolio is simply the market portfolio $\m$ defined in Section~\ref{markets}, with weight processes 
\[
\m_i(t) = \frac{\BX_i(t)}{\BX_1(t)+\cdots+\BX_n(t)},
\]
for $i = 1, \ldots, n$, where $\BX_1, \ldots, \BX_n$ are the two-month implied futures prices defined above. Because futures do not have a size, the market portfolio $\m$ weights each futures contract by its implied price $\BX_i$. The equal-weighted portfolio is the portfolio with weights constant and equal to $1/n$ for all $t$. The diversity-weighted portfolio is the portfolio with weights
\begin{equation*} 
 \p_i(t) = \frac{\BX^p_i(t)}{\BX^p_1(t)+\cdots+\BX^p_n(t)},
\end{equation*} 
for $i = 1, \ldots, n$, with the diversity parameter $p$ set equal to $-0.5$ following \citet{Vervuurt/Karatzas:2015}. Finally, the reverse price-weighted portfolio is the reverse-weighted portfolio defined in Section~\ref{fom}, with $\p_i(t) = \m_{(n + 1 - r_t(i))}$, for $i = 1, \ldots, n$.

Although our commodity futures data cover 1968-2018, the fact that we normalize implied futures prices by setting them equal to each other on the November 1968 start date, as discussed above, implies that these prices cannot be meaningfully compared and ranked until they have time to disperse. Thus, for each commodity in our data set, we wait five years after the start of its price data before including that commodity futures contract in our equal-weighted, price-weighted, diversity-weighted, and reverse-weighted portfolios. Furthermore, we do not start forming portfolios until we have at least ten commodities with at least five years of price data, which occurs in November 1977. As a consequence, all results we report for the different portfolios run from November 1977 to January 2018.

In Table~\ref{returnsTab}, we report the average and standard deviation of the annual log-returns for equal-weighted, price-weighted, diversity-weighted, and reverse-weighted portfolios of commodity futures from 1977-2018. The cumulative returns for these three portfolios are shown in Figure~\ref{returnsFig}. This figure clearly shows that the reverse-weighted portfolio grows faster than the price-weighted market portfolio over the 1977-2018 time period, consistent with the result in Proposition~\ref{P3.1}. The reverse-weighted portfolio also grows faster than the equal-weighted and diversity-weighted portfolios. The table also shows that the equal-weighted, diversity-weighted, and reverse-weighted portfolios all have lower returns standard deviations than the price-weighted market portfolio, despite the fact that these three portfolios' average returns are higher.

Table~\ref{returnsTab} also reports the Sharpe ratios of the equal-weighted, diversity-weighted, and reverse-weighted portfolios, defined as the average log-returns of each of the three portfolios minus the log-returns of the price-weighted portfolio divided by the standard deviation of these relative returns. The cumulative relative returns for the equal-weighted, diversity-weighted, and reverse-weighted portfolios are shown in Figure~\ref{relReturnsFig}. The results in the table and figure confirm the faster growth of the reverse-weighted portfolio relative to the equal-weighted and diversity-weighted portfolios, and also reveal a higher Sharpe ratio for the reverse-weighted portfolio.

According to \eqref{1.1}, the log-returns of each of the the three commodity futures portfolios shown in Figure~\ref{returnsFig} can be decomposed into the weighted growth rate of the individual futures contracts and the excess growth rate process $\g^*$, which is given by \eqref{1.2}. Figure~\ref{gammaStarFig} plots the cumulative values of the excess growth rate process for the price-weighted, reverse-weighted, equal-weighted, and diversity-weighted portfolios of commodity futures from 1977-2018. These excess growth rates are calculated using the decomposition \eqref{1.1} together with the log-returns of each portfolio and the weighted log-returns of the individual futures contracts held in each portfolio. According to the figure, the processes $\g^*$ for the reverse-weighted and price-weighted portfolios are approximately equal to each other, which is an important condition needed to apply Proposition~\ref{P3.1} regarding the growth rates of the two portfolios. Therefore, Figure~\ref{gammaStarFig} together with Proposition~\ref{P3.1} partially explains the outperformance of the reverse-weighted portfolio relative to the price-weighted portfolio shown in Figure~\ref{returnsFig}.

As discussed earlier, the returns of the price-weighted, reverse-weighted, equal-weighted, and diversity-weighted portfolios are not equal only to changes in the implied futures prices in these portfolios but also to the differential carry for each portfolio. According to Figure~\ref{gammaStarFig} together with Proposition~\ref{P3.1}, market efficiency requires that the reverse-weighted portfolio have a consistently lower carry than the price-weighted portfolio, so that this lower carry can cancel out the higher growth rate of implied futures prices in the reverse-weighted portfolio $\g_{\p}$ implied by the proposition.

Figure~\ref{carryFig} plots the cumulative carry for each of these four portfolios of commodity futures. According to the figure, the carry for each of the four portfolios is consistently negative, with the most negative carry for the reverse-weighted portfolio and the least negative carry for the price-weighted portfolio. Furthermore, the magnitude of the cumulative effects of this carry on returns is meaningfully large. Nonetheless, Figure~\ref{returnsFig} shows that the carry for the reverse-weighted portfolio is not far enough below the carry for the price-weighted portfolio so as to cancel out the higher growth rate of implied futures prices in the reverse-weighted portfolio. According to this figure, the differential carry of the reverse-weighted and price-weighted portfolios shown in Figure~\ref{carryFig} is not sufficient to equate the log returns of the two portfolio. Thus, there remains an inefficiency in the commodity futures market.


\subsection*{First-order approximation of implied futures market}
In order to understand the outperformance of the reverse-weighted portfolio relative to the price-weighted portfolio as shown in Table~\ref{returnsTab} and Figure~\ref{returnsFig}, we estimate the first-order approximation of the two-month normalized implied futures market plotted in Figure~\ref{pricesFig}. According to Definition~\ref{D3.1}, a first-order model is defined only for a fixed number of ranked assets. Therefore,  we estimate the first-order approximation of the implied futures market using commodity futures price data starting on April 1995, since this date is five years after the last commodity futures contract price data begin (Table~\ref{commInfoTab}) and hence it is also the last date on which a new commodity enters our data set. The total number of commodity implied futures prices is thus fixed at 26 from April 1995 to January 2018, and we are able to estimate the first-order approximation of this market over that time period.

Figure~\ref{GsFig} plots the parameters $g_k$, defined by \eqref{4.5}, for the first-order approximation of the two-month implied futures market. This figure plots the values of these parameters after applying a reflected Gaussian filter with a bandwidth of six ranks together with the unfiltered values, which are represented by the red circles in the figure. Figure~\ref{GsFig} shows that the first-order approximation of this market features mostly higher growth rates $g_k$ for lower-ranked commodity futures than for higher-ranked futures, and that the sum of these growth rates is negative for top-ranked subsets. This pattern is consistent with the stability condition \eqref{3.3} for the $g_k$ parameters of a first-order model.

Figure~\ref{sigmasFig} plots both filtered and unfiltered values of the parameters $\s_k$, defined by \eqref{4.5}, for this same first-order approximation. This figure shows that the filtered values of the volatility parameters $\s_k$ are roughly constant across ranks, which implies that the conditions of Corollary~\ref{C3.1} are approximately satisfied by the commodity implied futures market. According to Corollary~\ref{C3.1}, then, the growth rate of the reverse-weighted portfolio of commodity futures should be greater than that of the price-weighted portfolio, before adjusting for the carry of each portfolio. In fact, we find that the growth rate of reverse-weighted implied commodity futures is enough larger than that of price-weighted implied commodity futures that the reverse-weighted portfolio substantially outperforms the price-weighted portfolio, despite its significantly more negative carry as shown in Figure~\ref{carryFig}.


A closer inspection of Figure~\ref{sigmasFig} reveals that the unfiltered estimates of the parameters $\s_k$ are highest and approximately equal to each other at the top two and bottom two ranks. Furthermore, these volatility parameters form a roughly symmetric U-shape when plotted versus rank. The results of Corollary~\ref{C3.1} still apply to such a first-order model, since $\s^2_k = \s^2_{n + 1 - k} > 0$ for all $k = 1, \ldots, n$ in the case of a symmetric U-shape for the volatility parameters. Thus, both the filtered and unfiltered estimates of the volatility parameters $\s_k$ are approximately consistent with the outperformance of the reverse-weighted portfolio shown in Figure~\ref{returnsFig}, at least in the absence of a different carry for the two portfolios.

Finally, Figure~\ref{simVsDataFig} presents a log-log plot of average relative implied futures prices for different ranked prices versus rank for 1995-2018 together with the average relative prices from 10,000 simulations of the first-order model \eqref{4.4} using the estimated parameters $g_k$ and $\s_k$ from Figures~\ref{GsFig} and~\ref{sigmasFig}. This figure shows that the simulated first-order approximation provides a reasonably accurate match to the actual relative commodity implied futures price distribution observed over this time period.



\bibliographystyle{chicago}
\bibliography{math2,math3}

\pagebreak

\begin{table}[H]
\vspace{-15pt}
\begin{center}
\setlength{\extrarowheight}{3pt}
\begin{tabular} {| l ||c|c|}

\hline

      Commodity  & Exchange           & Start            \\
                          &   where Traded   & Date             \\

\hline

  Soybean Meal      &  CBOT        &  11/1968   \\
  Soybean Oil          &  CBOT       &  11/1968   \\
  Soybeans             &  CBOT        &  11/1968    \\
  Wheat                   &  CBOT        &  1/1969    \\
  Corn                     &  CBOT        &  1/1969    \\
  Live Hogs             &  CME          &  12/1969   \\
  Cotton                  &  NYBOT      &  10/1972   \\
  Silver                   &  COMEX      &  10/1972   \\
  Orange Juice       &  CEC           &  11/1972    \\
  Platinum              &  NYMEX      &  11/1972    \\
  Sugar                   &  CSC           &  1/1973    \\
  Lumber                &  CME           &  7/1973    \\
  Coffee                  &  CSC           &  10/1973    \\
  Oats                     &  CBOT        &  10/1974   \\
  Gold                     &  COMEX     &  1/1975   \\
  Live Cattle            &  CME          &  4/1976   \\
  Wheat, K.C.         &  KCBT         &  5/1976    \\
  Feeder Cattle      &  CME           &  11/1977    \\
  Heating Oil          &  NYMEX      &  10/1979    \\
  Cocoa                 &  CSC            &  1/1981    \\
  Wheat, Minn.      &  MGE           &  1/1981    \\
  Palladium            &  NYMEX      &  1/1983    \\
  Crude Oil            &  NYMEX       &  4/1983    \\
  Rough Rice         &  CBOT         &  9/1986    \\
  Copper                &  COMEX     &  11/1988    \\
  Natural Gas         &  NYMEX     &  4/1990    \\
  
\hline

\end{tabular}
\end{center}
\vspace{-5pt} \caption{List of commodity futures contracts along with the exchange where each commodity is traded and the date each commodity started trading.}
\label{commInfoTab}
\end{table}

\begin{table}[H]
\vspace{10pt}
\begin{center}
\setlength{\extrarowheight}{3pt}
\begin{tabular} {|c||c|c|c|c|}

\hline

     & Price-Weighted    & Equal-Weighted & Diversity-Weighted & Reverse Price-Weighted  \\
     & Portfolio               & Portfolio             & Portfolio                   & Portfolio                            \\

\hline

  Average                         &                 -1.43\%       &                  0.43\%        &   1.09\%     &    1.83\%          \\
  Standard Deviation        &                15.38\%       &                 13.79\%       &   13.68\%   &    13.85\%        \\
  Sharpe Ratio                 &                                     &                  0.40            &   0.41         &     0.47              \\
 
\hline

\end{tabular}
\end{center}
\vspace{-5pt} \caption{Annual average and standard deviation of log-returns for price-weighted, equal-weighted, diversity-weighted, and reverse price-weighted portfolios, and Sharpe ratio of equal-weighted, diversity-weighted, and reverse price-weighted portfolios relative to the price-weighted portfolio, 1977-2018.}
\label{returnsTab}
\end{table}

\pagebreak

\begin{figure}[H]
\begin{center}
\vspace{-15pt}
\hspace{-20pt}\scalebox{.67}{ \includegraphics{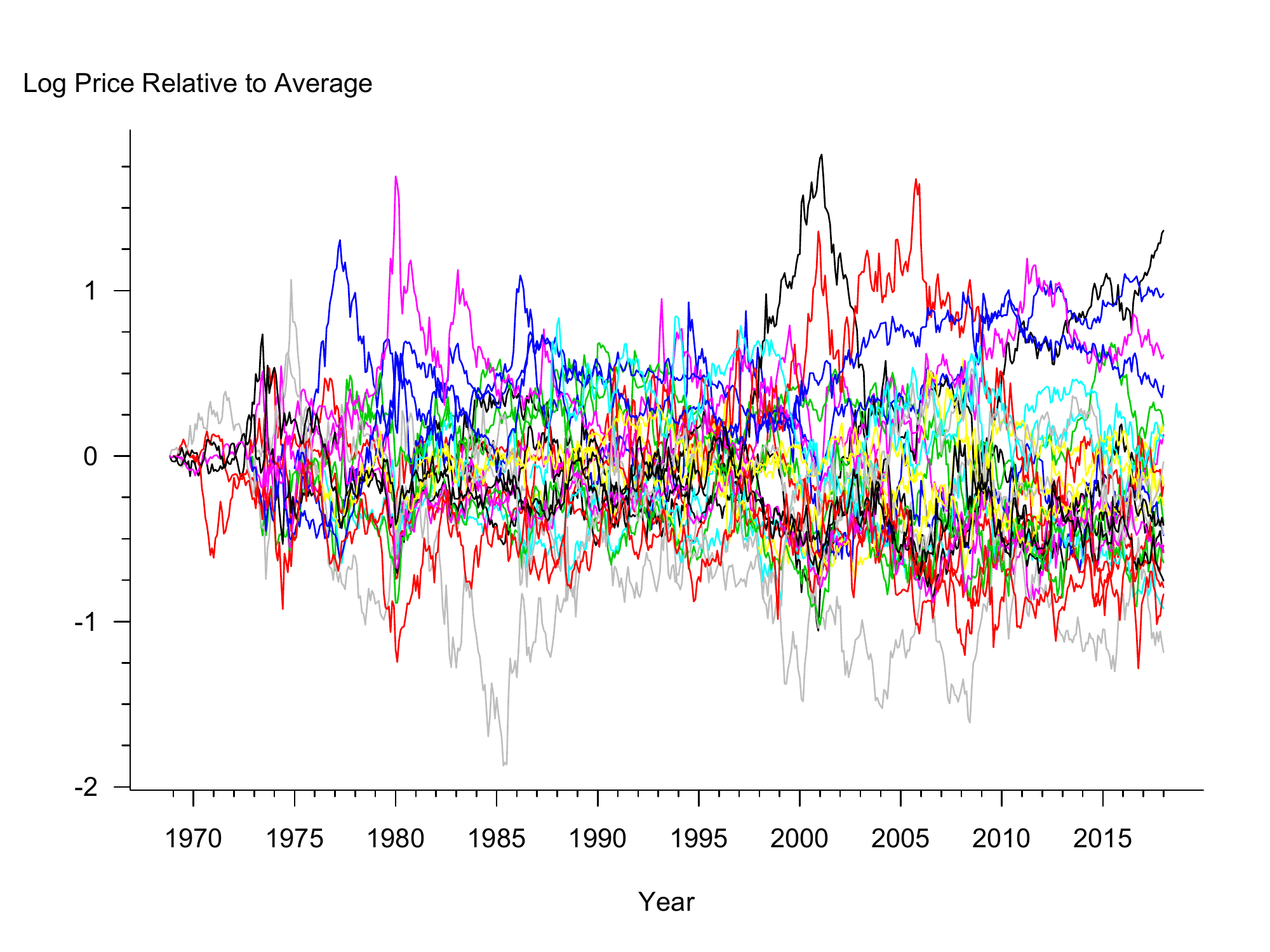}}
\end{center}
\vspace{-24pt} \caption{Two-month implied futures log-prices relative to the average, 1968-2018.}
\label{pricesFig}
\end{figure}

\begin{figure}[H]
\begin{center}
\vspace{5pt}
\hspace{-20pt}\scalebox{.67}{ \includegraphics{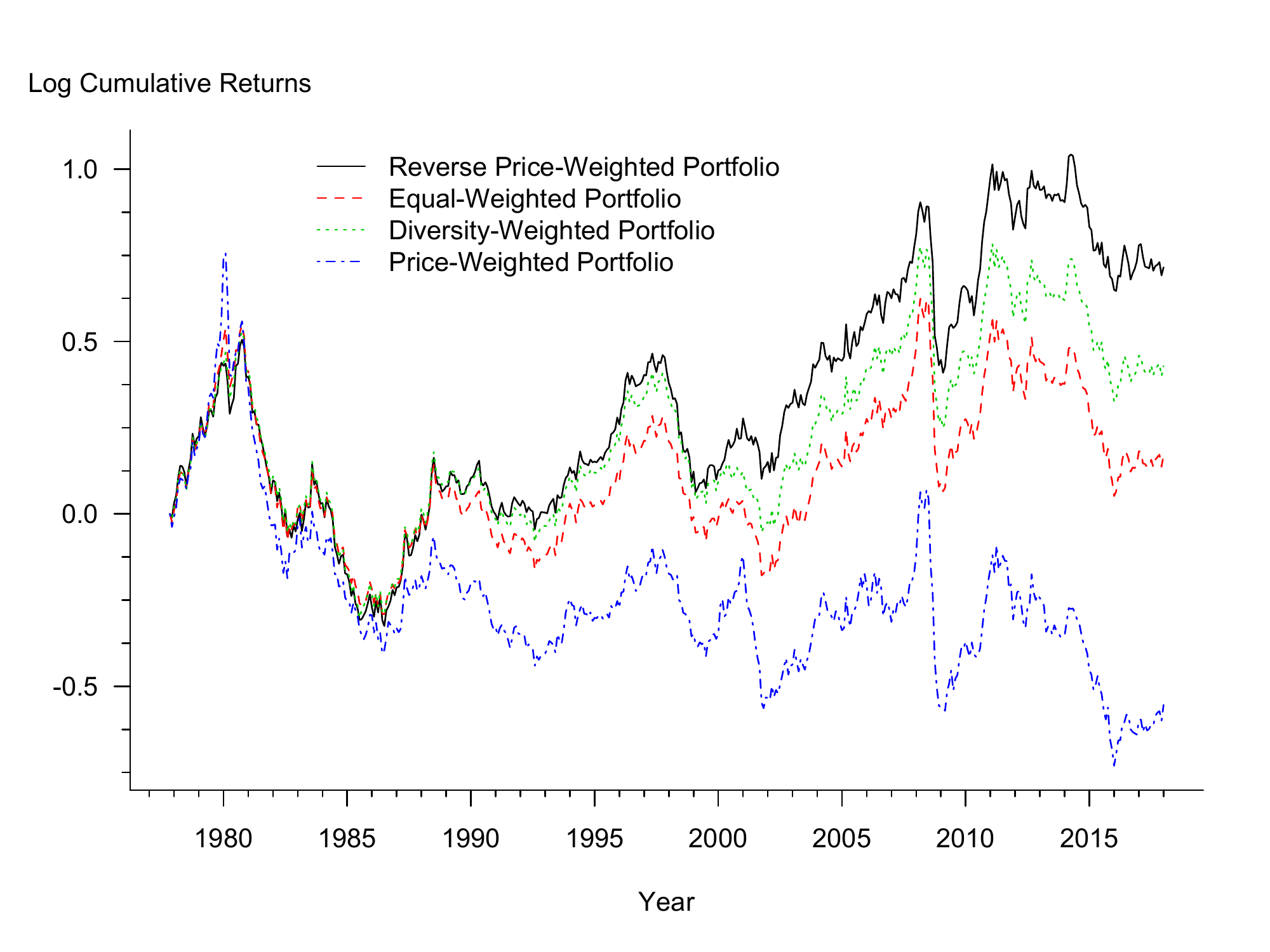}}
\end{center}
\vspace{-24pt} \caption{Cumulative log-returns for the reverse price-weighted, equal-weighted, diversity-weighted, and price-weighted portfolios, 1977-2018.}
\label{returnsFig}
\end{figure}

\begin{figure}[H]
\begin{center}
\vspace{-15pt}
\hspace{-20pt}\scalebox{.67}{ \includegraphics{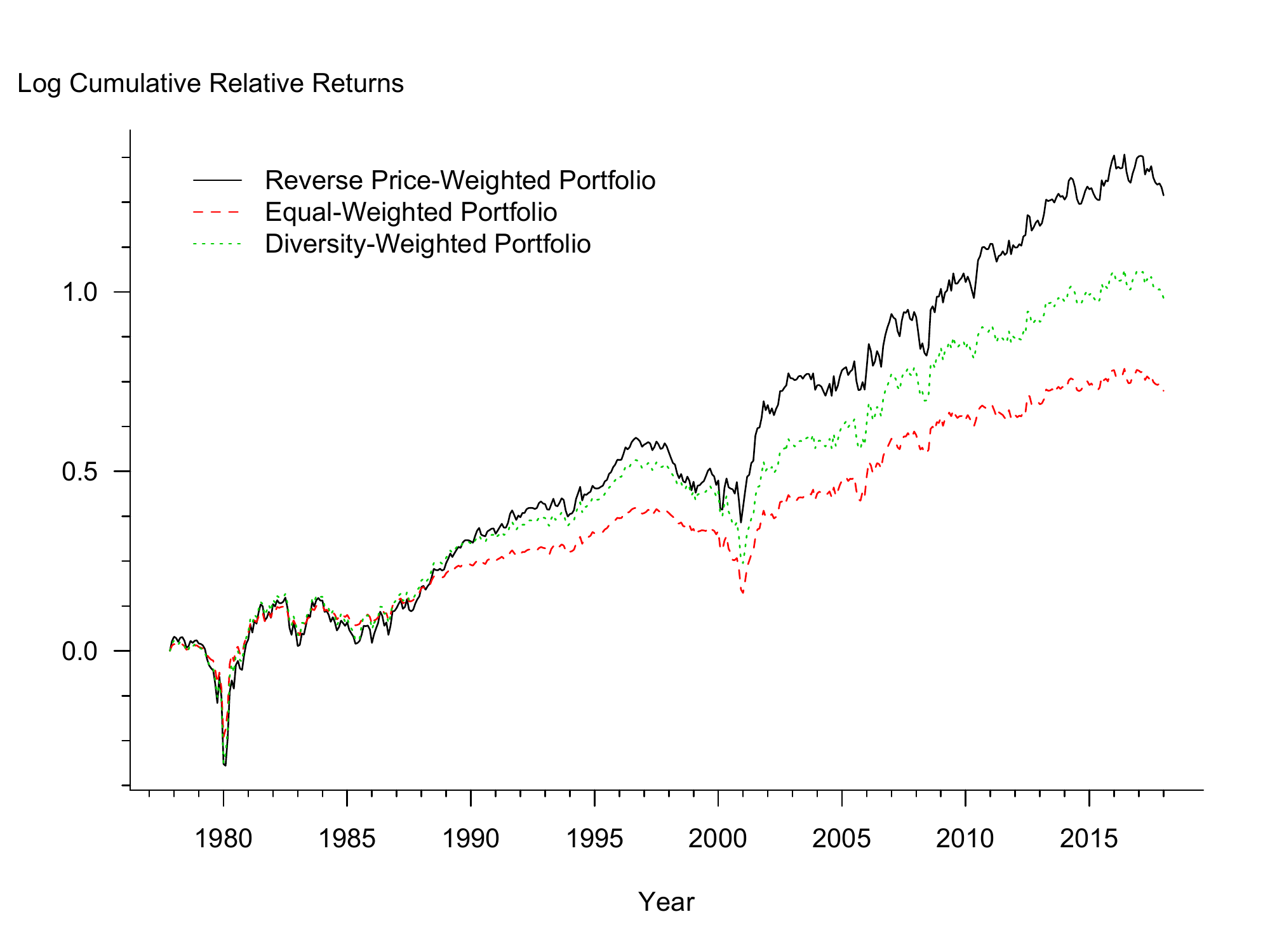}}
\end{center}
\vspace{-24pt} \caption{Cumulative log-returns for the reverse price-weighted, equal-weighted, and diversity-weighted portfolios relative to the price-weighted portfolio, 1977-2018.}
\label{relReturnsFig}
\end{figure}

\begin{figure}[H]
\begin{center}
\vspace{5pt}
\hspace{-20pt}\scalebox{.67}{ \includegraphics{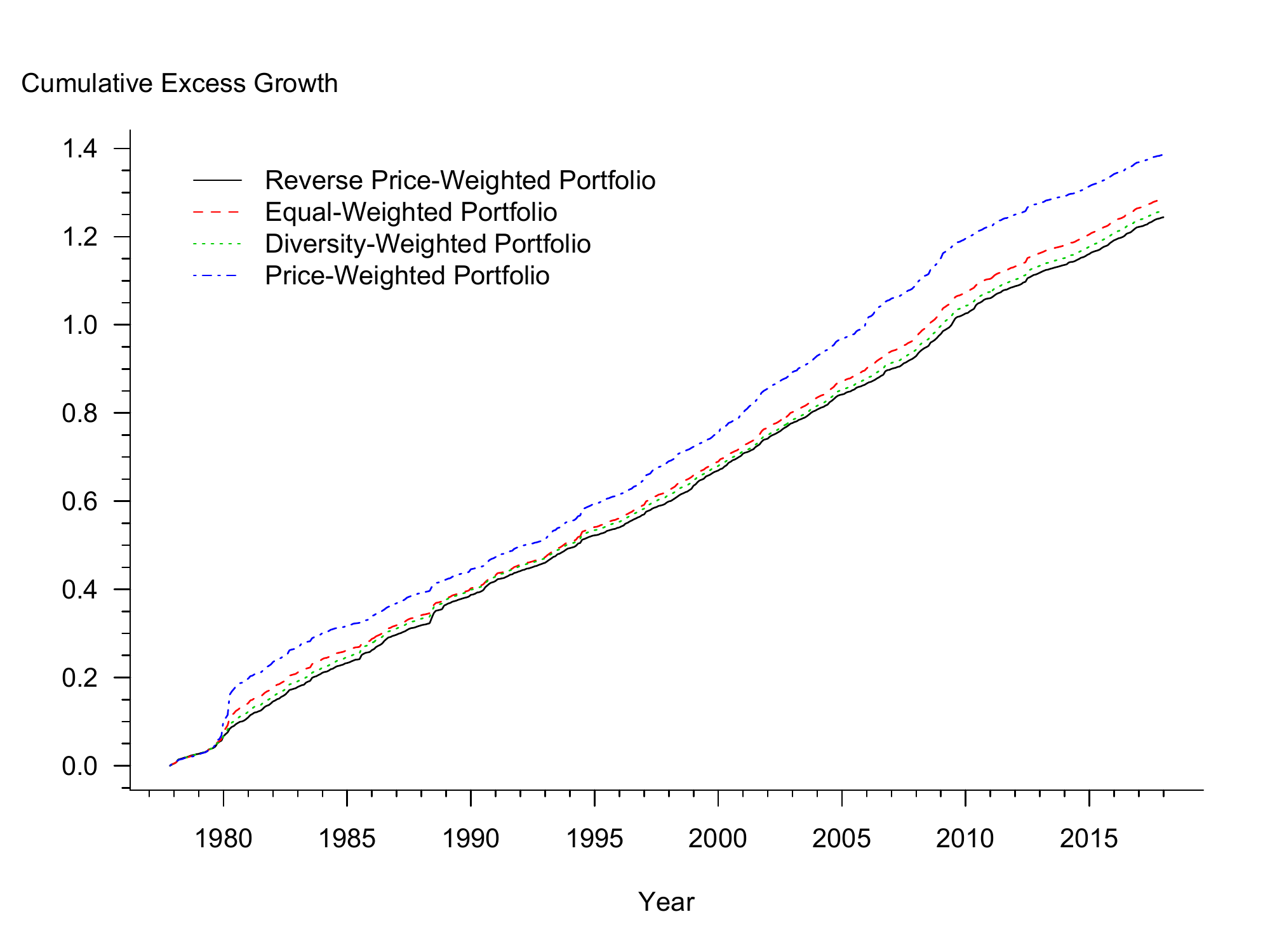}}
\end{center}
\vspace{-24pt} \caption{Cumulative values of the excess growth rate process $\g^*$ for the reverse price-weighted, equal-weighted, diversity-weighted, and price-weighted portfolios, 1977-2018.}
\label{gammaStarFig}
\end{figure}

\begin{figure}[H]
\begin{center}
\vspace{-15pt}
\hspace{-20pt}\scalebox{.67}{ \includegraphics{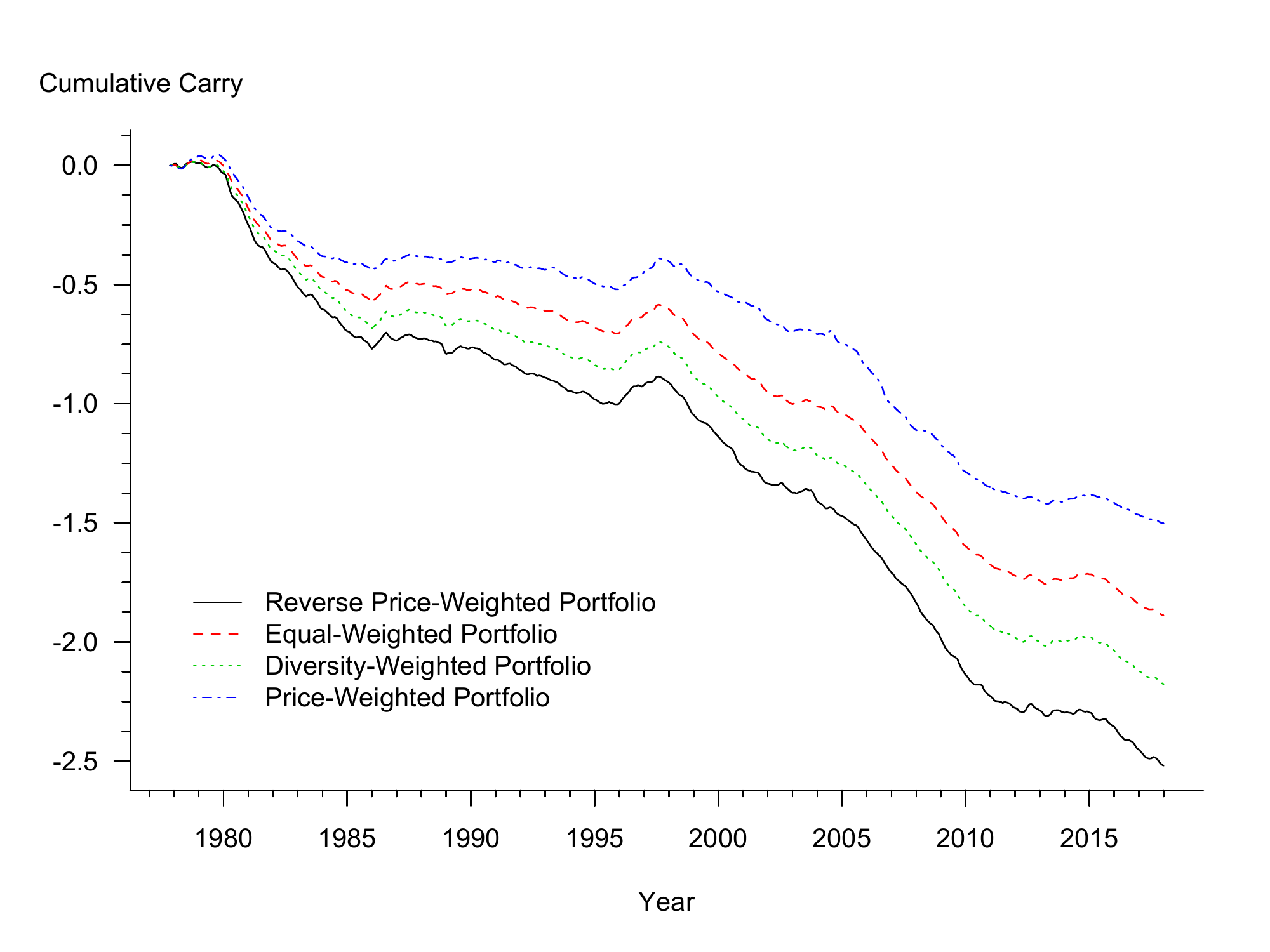}}
\end{center}
\vspace{-24pt} \caption{Cumulative carry for the reverse price-weighted, equal-weighted, diversity-weighted, and price-weighted portfolios, 1977-2018.}
\label{carryFig}
\end{figure}

\begin{figure}[H]
\begin{center}
\vspace{5pt}
\hspace{-20pt}\scalebox{.67}{ \includegraphics{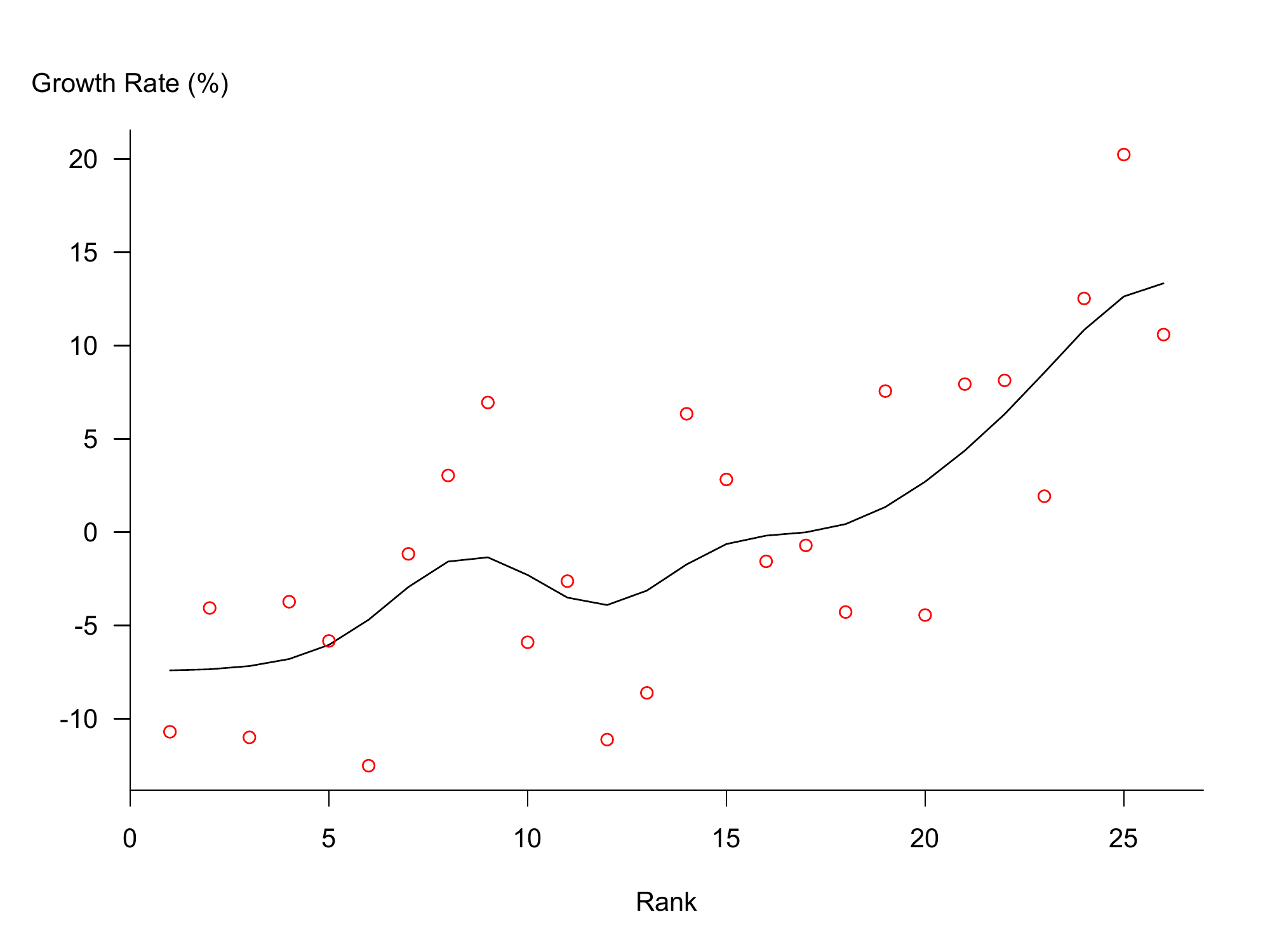}}
\end{center}
\vspace{-24pt} \caption{Estimated parameters $g_k$ for the first-order approximation of the two-month implied-futures market, 1995-2018.}
\label{GsFig}
\end{figure}

\begin{figure}[H]
\begin{center}
\vspace{-15pt}
\hspace{-20pt}\scalebox{.67}{ \includegraphics{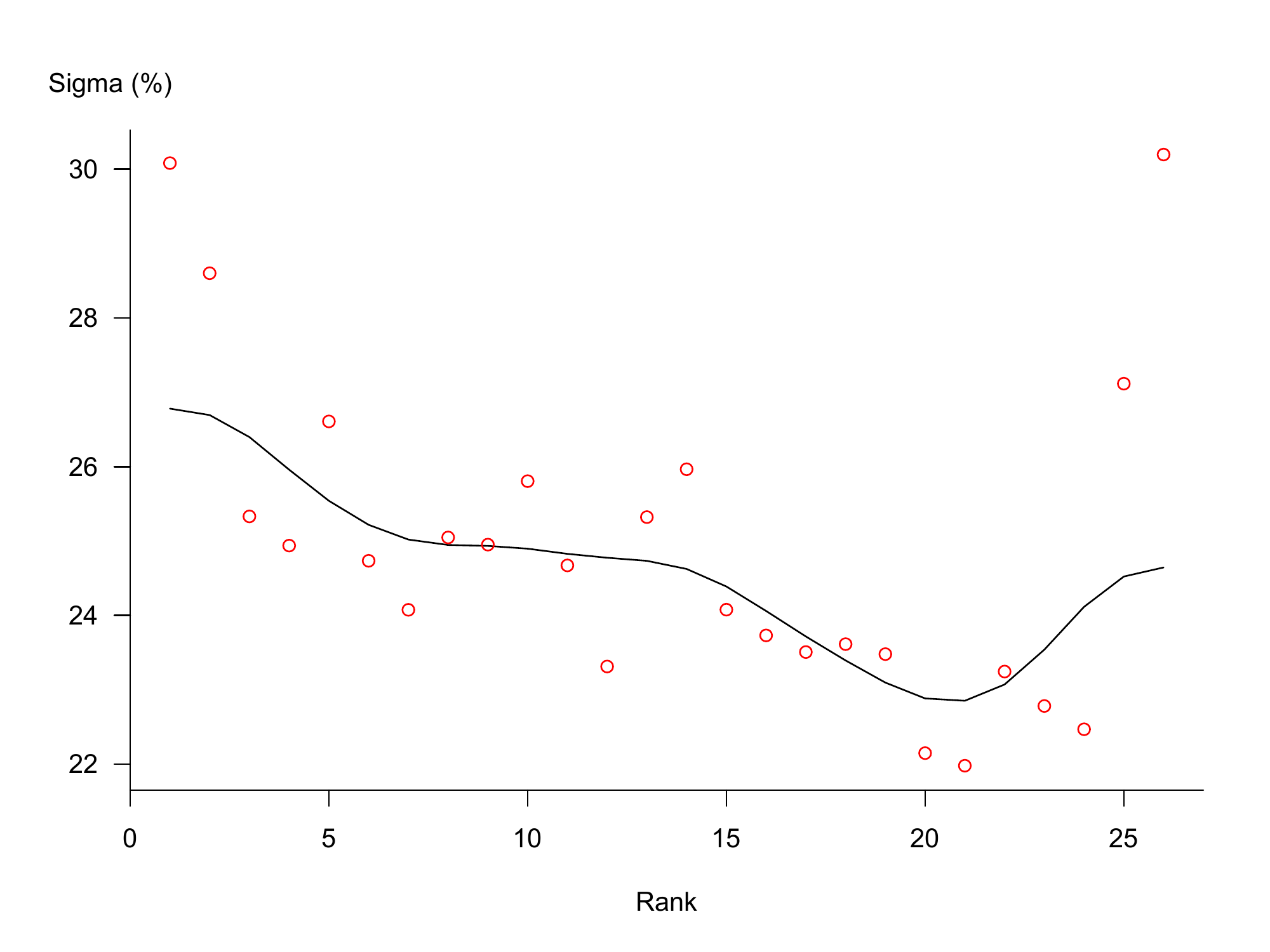}}
\end{center}
\vspace{-24pt} \caption{Estimated parameters $\s_k$ for the first-order approximation of the two-month implied-futures market, 1995-2018.}
\label{sigmasFig}
\end{figure}

\begin{figure}[H]
\begin{center}
\vspace{5pt}
\hspace{-20pt}\scalebox{.67}{ \includegraphics{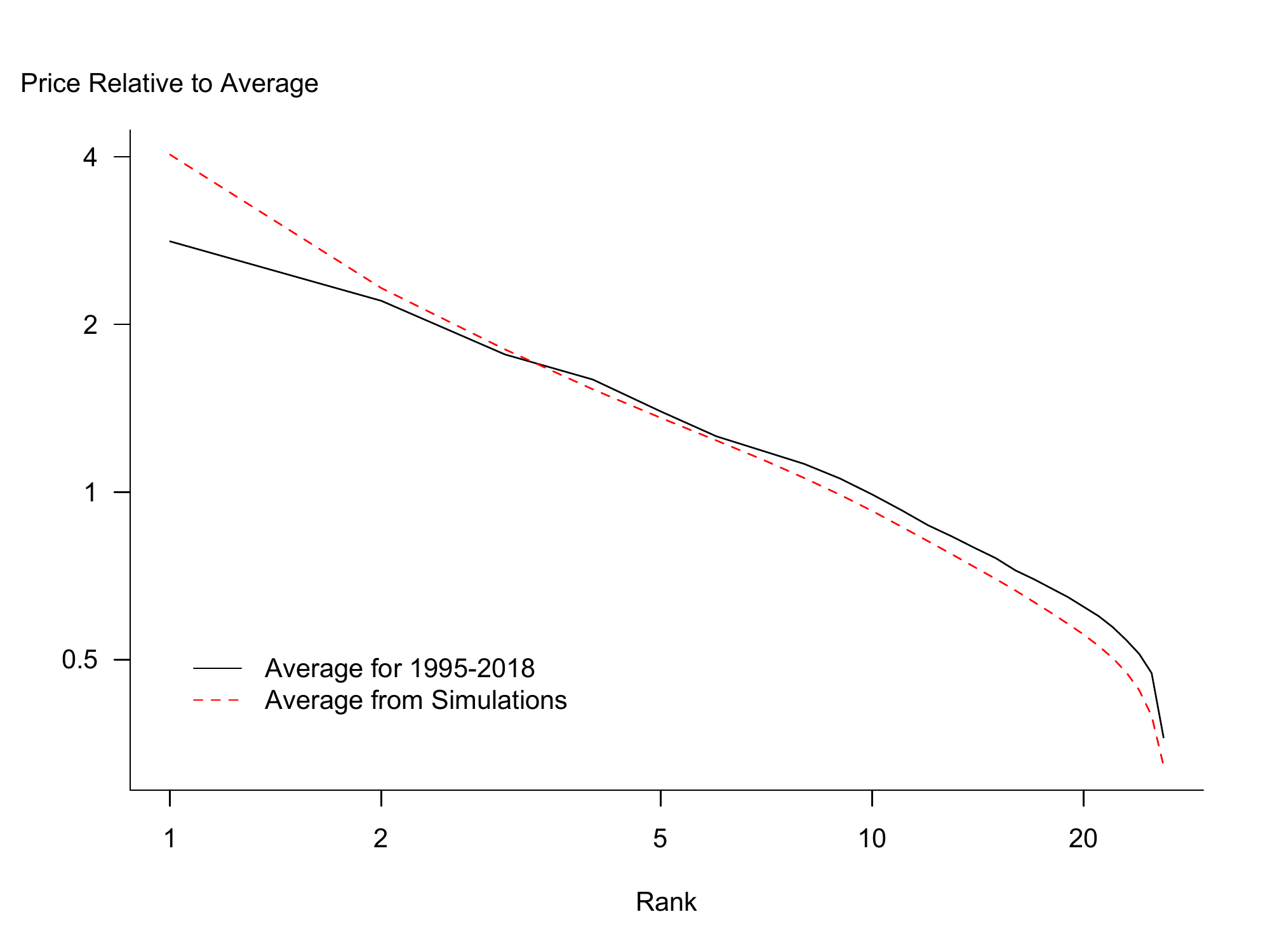}}
\end{center}
\vspace{-24pt} \caption{Average relative two-month implied futures prices for different ranks and average relative prices for different ranks from 10,000 simulations of the first-order model \eqref{4.4}, 1995-2018.}
\label{simVsDataFig}
\end{figure}

\end{document}